\def\ps@headings{%
\def\@oddhead{\mbox{}\scriptsize\rightmark \hfil \thepage}%
\def\@evenhead{\scriptsize\thepage \hfil \leftmark\mbox{}}%
\def\@oddfoot{}%
\def\@evenfoot{}}
\newtheorem{theorem}{\bf Theorem}
\newtheorem{proposition}{\bf Proposition}
\newtheorem{definition}{\bf Definition}
\newtheorem{remark}{\bf Remark}
\newlength{\aligntop}
\newlength{\alignbot}
\renewenvironment{align}{%
  \vspace{\aligntop}
  \start@align\@ne\st@rredfalse\m@ne
}{%
  \math@cr \black@\totwidth@
  \egroup
  \ifingather@
    \restorealignstate@
    \egroup
    \nonumber
    \ifnum0=`{\fi\iffalse}\fi
  \else
    $$%
  \fi
  \ignorespacesafterend%
  \vspace{\alignbot}\par\noindent
}
\begin{document}

\title{{\huge Coalitional Games in Partition Form for Joint Spectrum Sensing and Access in Cognitive Radio Networks} }

\author{Walid Saad, Zhu Han, Rong Zheng, Are Hj{\o}rungnes, Tamer Ba\c{s}ar, and H. Vincent Poor \vspace{-1.3cm} \thanks{W.~Saad is
with the Electrical and Computer Engineering Department, University of Miami, Coral Gables, FL, USA, email: \texttt{walid@miami.edu}.
Z.~Han is with the Electrical and Computer Engineering Department, University of Houston, Houston, USA, email: \texttt{zhan2@mail.uh.edu}. R. Zheng is with the Computer Science Department, University of Houston, Houston, USA, email: \texttt{rzheng@cs.uh.edu}. A. Hj{\o}rungnes was with the UNIK - Graduate University Center, University of Oslo, Oslo, Norway. T. Ba\c{s}ar is with the Coordinated Science Laboratory, University of Illinois at Urbana Champaign, IL, USA, email: \texttt{basar1@illinois.edu}.  H. V. Poor is
with the Electrical Engineering Department, Princeton University, Princeton, NJ, USA, email: \texttt{poor@princeton.edu}. This  research is supported by NSF Grants CNS-0910461, CNS-0953377, CNS-0905556, ECCS-1028782, CNS-1117560, and CNS-0832084, by the Research Council of Norway through the projects 183311/S10, 176773/S10 and 18778/V11, by AFOSR Grant FA9550-09-1-0249, and by DTRA under Grant HDTRA1-07-1-0037.}}

\date{}
\maketitle

\thispagestyle{empty}
\begin{abstract}
Unlicensed secondary users~(SUs) in cognitive radio networks are subject to an inherent tradeoff between \emph{spectrum sensing} and \emph{spectrum access}. Although each SU has an incentive to sense the primary user~(PU) channels for locating spectrum holes, this exploration of the spectrum can come at the expense of a shorter transmission time, and, hence, a possibly smaller capacity for data transmission. This paper investigates the impact of this tradeoff on the cooperative strategies of a network of SUs that seek to cooperate in order to improve their view of the spectrum (sensing), reduce the possibility of interference among each other, and improve their transmission capacity (access). The problem is modeled as a coalitional game in \emph{partition form} and an algorithm for coalition formation is proposed. Using the proposed algorithm, the SUs can make individual distributed decisions to join or leave a coalition while maximizing their utilities which capture the average time spent for sensing as well as the capacity achieved while accessing the spectrum. It is shown that, by using the proposed algorithm, the SUs can self-organize into a network partition composed of disjoint coalitions, with the members of each coalition cooperating to jointly optimize their sensing and access performance. Simulation results show the performance improvement that the proposed algorithm yields with respect to the non-cooperative case. The results also show how the algorithm allows the SUs to self-adapt to changes in the environment such as the change in the traffic of the PUs, or slow mobility.
\end{abstract}
\indent \indent \indent \indent 
 {\bf \small Keywords:} {\small cognitive radio, coalitional games, coalition formation, game theory, spectrum sensing and access.}
\vspace{-0.2cm}
\newpage
 \setcounter{page}{1}
\section{Introduction}
With the ongoing growth in wireless services, the demand for the radio spectrum has significantly increased. However, the radio spectrum is limited and much of it has already been licensed to existing operators. Numerous studies conducted by agencies such as the Federal Communications Commission~(FCC) in the United States have shown that the actual licensed spectrum remains unoccupied for large periods of time \cite{FCC}. Thus, \emph{cognitive radio}~(CR) systems have been proposed \cite{CR01} in order to efficiently exploit this under-utilized spectrum. Cognitive radios or secondary users~(SUs) are wireless devices that can intelligently monitor and adapt to their environment and, hence, they are able to share the spectrum with the licensed primary users~(PUs), operating whenever the PUs are idle. Implementing cognitive radio systems faces various challenges \cite{CR02,MAT1}, notably, for spectrum sensing and spectrum access. Spectrum sensing mainly deals with the stage during which the SUs attempt to learn their environment prior to the spectrum access stage when the SUs actually transmit their data.

\subsection{Existing Work on Spectrum Sensing and Access}
Existing work has considered various aspects of spectrum sensing and spectrum access, individually. In \cite{CS00}, the performance of spectrum sensing, in terms of throughput, is investigated when the SUs share their instantaneous knowledge of the channel. The work in \cite{DT00} studies the performance of different detectors for spectrum sensing, while in \cite{YY02}, the sensing time that maximizes the achievable throughput of the SUs, given a detection-false alarm rate is derived. The authors in \cite{GL00} study the use of a novel approach for collaborative spectrum sensing based on abnormality detection. Different cooperative techniques for improving spectrum sensing performance are discussed in \cite{YY01,GL02,GL03,CS01,CS04,WS01,TVT}. Further, spectrum access has also received considerable attention \cite{SA00,KCC00,SA01,SA02,SA03,SA04,SA06,KCC01}. In \cite{SA00}, a dynamic programming approach is proposed to allow the SUs to maximize their channel access times while taking into account a penalty factor from any collision with the PU. The work in \cite{SA00} (and the references therein) establish that, in practice, the sensing time of CR networks is \emph{large} and affects the access performance of the SUs. The authors in \cite{KCC00} propose a novel multiple access scheme that takes into account the physical layer transmission in cognitive networks. In \cite{SA01}, the authors study the problem of interference alignment in a cognitive radio network consisting of SUs having multiple antennas. Non-cooperative solutions for dynamic spectrum access are proposed in \cite{SA02} while taking into account changes in the SUs' environment such as the arrival of new PUs, among others. Additional challenges of spectrum access are studied in \cite{SA03,SA04,SA06,KCC01}.

Clearly, the spectrum sensing and spectrum access aspects of cognitive networks have been widely investigated, independently. However, a key challenge that remains relatively unexplored is to study the tradeoff between \emph{spectrum sensing} and \emph{spectrum access} when the SUs seek to improve both aspects, \emph{jointly}. This tradeoff arises from the fact that the sensing time for the SUs is non-negligible \cite{SA00}, and can reduce their transmission performance. Thus, although each SU has an incentive to sense as many PU channels as possible for locating access opportunities, this spectrum exploration may come at the expense of a smaller transmission time, and, hence, a possibly smaller capacity for data transmission. Also, due to the limited capability of the cognitive devices, each SU, on its own, may not be able to explore more than a limited number of channels. As a result, the SUs can rely on cooperation for sharing the spectrum knowledge with nearby cognitive radios. Therefore, it is important to design cooperative strategies that allow the SUs to improve their performance while taking into account \emph{both} sensing and access metrics.

\subsection{Our Contributions}
The main contribution of this paper is to devise a cooperative scheme among the SUs in a multi-channel cognitive network, which enables them to improve their performance jointly during spectrum sensing and access. From a sensing perspective, we propose a scheme through which the SUs cooperate in order to share their channel knowledge so as to improve their view of the spectrum and reduce their sensing times. From an access perspective, the proposed cooperation protocol allows the SUs to improve their access capacities by: (i)- learning from their cooperating partners the existence of alternative channels with better conditions, (ii)- reducing the mutual interference, and (iii)- exploiting multiple channels simultaneously, when possible. We model the problem as a coalitional game in \emph{partition form}, and we propose an algorithm for coalition formation. Although coalitional games in partition form have been widely studied in game theory, to the best of our knowledge, no existing work has utilized the partition form of coalitional game theory in the design of cognitive radio systems. The proposed coalition formation algorithm allows the SUs to make distributed decisions to join or leave a coalition, while maximizing their utility which accounts for the average time needed to locate an unoccupied channel (spectrum sensing) and the average capacity achieved when transmitting the data (spectrum access).  Thus, the SUs self-organize into disjoint coalitions that constitute a Nash-stable network partition. Within every formed coalition, the SUs act cooperatively by sharing their views of the spectrum, coordinating their sensing orders, and distributing their powers over the seized channels whenever possible. Also, the proposed coalition formation algorithm allows the SUs to adapt the topology to environmental changes such as the changes in the availability of the PU channels or the slow mobility of the SUs. Simulation results assess the performance of the proposed algorithm relative to the non-cooperative case.

Note that, in our previous work \cite{WS01,TVT}, we have studied the use of cooperative games for performing collaborative spectrum sensing in order to improve the tradeoff between the probability of detection of the PU and the probability of false alarm for a single-channel cognitive network. However, our work in \cite{WS01,TVT} is focused on detection techniques and does not take into account spectrum access, interference coordination, spectrum view sharing, capacity optimization, sensing time, or other the key factors in spectrum sharing and access. To this end, the solutions, methods, and models studied in \cite{WS01,TVT} are inapplicable for the problem we address here, both from an application/systme perspective as well as from the game theoretic perspective (due to the need for games in partition form which is a class of cooperative games that is significantly different from the characteristic form adopted in \cite{WS01,TVT}).

The rest of this paper is organized as follows: Section~\ref{sec:noncoop} presents
the non-cooperative spectrum sensing and access model. In Section \ref{sec:jssa}, we present the proposed cooperation model for joint spectrum access and sensing, while in Section~\ref{sec:game} we model the problem using coalitional games in partition form and we devise a distributed algorithm for coalition formation. Simulation results are presented and analyzed in Section \ref{sec:sim}. Finally, conclusions are drawn in
Section \ref{sec:conc}.
\section{Non-cooperative Spectrum Sensing and Access}\label{sec:noncoop}
In this section, we present the non-cooperative procedure for spectrum sensing and access in a cognitive network, prior to proposing, in the next sections, cooperation strategies for improving the performance of the SUs jointly for sensing and access. A summary of the notation used throughout this paper is shown in Table~\ref{table:par}.
\begin{table}[e]
\scriptsize
\caption{Summary of the General Notation}
\centering
\begin{tabular}{|c | c|}
  \hline
  \textbf{Notation} &  \textbf{Explanation}  \\
  \hline\hline
 $\mathcal{N}$ & Set of $N$ secondary users \\ \hline
$\mathcal{K}$ & Set of $K$ primary users or channels \\ \hline
$S$ & A coalition of SUs, i.e., $S \subseteq \mathcal{N}$ \\ \hline
$ \mathfrak{P}$ & Set of all partitions of $\mathcal{N}$ \\ \hline
$ \Pi$ & A partition of $\mathcal{N}$ \\ \hline
$\mathcal{K}_i \subseteq \mathcal{K}$ & Subset of channels known by a certain SU $i \in \mathcal{N}$\\ \hline
 $\theta_k$   & Probability that channel $k$ is available  \\ \hline
$g_{i,k}$ & Channel gain experienced by an SU $i$ over a channel $k$ \\ \hline
$w_{i,k}$ & Weight assigned by an SU $i$ for a channel $k$ \\ \hline
  $\mathcal{K}^{\textrm{ord}}_i$ &  The set of channels used by an SU $i \in \mathcal{N}$ \emph{ordered} (non-cooperatively) \\ \hline
   $k_j$&A channel that is the $j$th element of the ordered set  $\mathcal{K}^{\textrm{ord}}_i$\\\hline
$\mathcal{K}_S = \cup_{i\in S} \mathcal{K}_i$ & The set of all channels known by a coalition $S$\\ \hline
$\mathcal{K}_i^S$ & The \emph{ordered} set of all channels used by an SU $i$ inside coalition $S$. This set results from the cooperative sorting in Algorithm~\ref{alg:sort}. \\ \hline
$\mathcal{B}_S=\{b_1,\ldots,b_{|S|}\}$  & Tuple with every element $b_i$ representing a channel in $\mathcal{K}_i^S$\\ \hline
$\mathfrak{F}_S$ & Family (or collection) of all channel tuples $\mathcal{B}_S$ for a coalition $S$\\ \hline
$\boldsymbol{P}_i^{\mathcal{B}_S} $ & $1\times |\mathcal{K}_{S_l}|$ vector  where each element $P_{i,k}^{\mathcal{B}_S }$ represents the power that SU $i \in S_l$ will use on channel $k \in \mathcal{K}_{S_l}$ given the selection $\mathcal{B}_S $\\ \hline
  \end{tabular}\label{table:par}
\end{table}

\subsection{Network Model}
Consider a cognitive radio network with $N$ secondary users~(SUs) engaged in the sensing of $K$ primary users'~(PUs) channels in order to access the spectrum and transmit their data to a common base station~(BS). Let $\mathcal{N}$ and $\mathcal{K}$ denote the set of SUs and the set of PUs (channels), respectively. Due to the random nature of the traffic of the PUs and to the dynamics of the PUs, each channel $k \in \mathcal{K}$ is available for use by the SUs with a probability of $\theta_k$ (which depends on PU traffic only and not on the SUs). Although for very small $K$ the SUs may be able to learn the statistics (probabilities $\theta_k$) of all $K$ channels,  we consider the generalized case where each SU $i \in \mathcal{N}$ can only have accurate statistics regarding a subset  $\mathcal{K}_i \subseteq \mathcal{K}$ of $K_i \le K$ channels (e.g., via standard learning algorithms), during the period of time the channels remain stationary. We consider a frequency selective channel, whereby the channel gain $g_{i,k}$ of any SU $i\in\mathcal{N}$ experienced at the BS when SU $i$ transmits over channel $k\in \mathcal{K}_i$ is 
$g_{i,k}=a_{i,k}\cdot d_{i}^{-\mu}$,
 with $d_{i}$ the distance between SU $i$ and the BS, $\mu$ the path loss exponent, and $a_{i,k}$ a Rayleigh distributed fading amplitude for SU $i$ on channel $k$ with a variance of $1$. We consider a channel with \emph{slow} fading which varies independently over the frequencies (quasi-static over the frequency band). Note that other channel types can also be accommodated.

\subsection{Non-cooperative Sensing Process}
For transmitting its data, each SU $i\in\mathcal{N}$ is required to sense the channels in $\mathcal{K}_i$ persistently, one at a time, in order to locate a transmission opportunity. We consider that each SU $ i \in \mathcal{N}$ is opportunistic which implies that SU $i$ senses the channels in $\mathcal{K}_i$ in a certain order, sequentially, and once it locates a spectrum hole it ends the sensing process and transmits over the first channel found unoccupied (by a PU). For the purpose of finding a preferred order for sensing, each SU $i$ assigns a weight $w_{i,k}$ to every channel $k\in\mathcal{K}_i$ which will be used in sorting the channels. When assigning the weights and ordering of the channels, the SUs face a tradeoff between improving their sensing times by giving a higher weight to channels that are often available, and improving their access performance by giving a higher weight to channels with better conditions.  The weights can be a function of a variety of parameters such as channel, interference, data, or others. Hereinafter, without loss of generality and in order to capture the joint sensing and access tradeoff, the weight $w_{i,k}$ assigned by an SU $i$ to a channel $k \in \mathcal{K}_i$ will be taken as
\begin{align}\label{eq:w}
w_{i,k}= \theta_k \cdot g_{i,k},
\end{align}
where $g_{i,k}$ is the channel gain experienced by SU $i$ over channel $k$ and $\theta_k$ is the probability that channel $k$ is available. Clearly, the weight given in (\ref{eq:w}) provides a balance between the need for quickly finding an available channel and the need for good channel conditions. Given the channel weights, each SU $i \in \mathcal{N}$ sorts its channels in decreasing order of weights and begins sensing these channels in an ordered manner. Hence, each SU $i$ senses the channels consecutively starting with the channel having the highest weight until finding an unoccupied channel on which to transmit, if any. The set of channels used by an SU $i \in \mathcal{N}$ \emph{ordered} non-cooperatively by decreasing weights is denoted by $\mathcal{K}^{\textrm{ord}}_i=\{k_1,\ldots,k_{K_i}\}$ where $w_{i,k_1} \ge w_{i,k_2} \ge \ldots \ge w_{i,k_{K_i}}$. Note that, other weights can also be adopted with little changes to the analysis in the remainder of this paper.

We consider a time-slotted spectrum sensing and access process whereby, within each slot, each SU $i \in \mathcal{N}$ spends a certain fraction of the slot for sensing the channels, and, once an available channel is found, the remaining time of the slot is used for spectrum access. In this regard, we consider that the channel available/busy time is comparable or larger to the duration of a slot, which is a common assumption in the literature \cite{CS00,CS02,CS04,SA00}.  Given the ordered set of channels $\mathcal{K}^{\textrm{ord}}_i$, the average fraction of time $\tau_i$ spent by any SU $i\in \mathcal{N}$ for locating a free channel, i.e., the average sensing time, is given by (the duration of a slot is normalized to $1$)
\begin{align}\label{eq:sensetime}
\tau_i(\mathcal{K}^{\textrm{ord}}_i) = \sum_{j=1}^{K_i} \left( j \cdot \alpha \cdot \theta_{k_j}\prod_{m=1}^{j-1} (1-\theta_{k_m})\right)+  \prod_{l=1}^{K_i} (1-\theta_{k_l})
\end{align}
where $\alpha < 1$ is the fraction of time needed for sensing a single channel, and $\theta_{k_j}$ is the probability that channel $k_j \in \mathcal{K}_i^{\textrm{ord}}$ is unoccupied.
 The first term in (\ref{eq:sensetime}) represents the average time spent for locating an unoccupied channel among the known channels in $\mathcal{K}_i^{\textrm{ord}}$, and the second term represents the probability that no available channel is found (in this case, the SU remains idle in the slot). Note that $\tau_i(\mathcal{K}^{\textrm{ord}}_i)$ is function of  $\mathcal{K}_i^{\textrm{ord}}$ and, hence, depends on the assigned weights and the ordering. For notational convenience, the argument of $\tau_i$ is dropped hereafter since the dependence on the channel ordering is clear from the context.

\subsection{Non-cooperative Utility Function}
When the SUs are acting in a non-cooperative manner, given the ordered set of channels $\mathcal{K}^{\textrm{ord}}_i$, the average capacity achieved by an SU $i \in \mathcal{N}$ is given by
\begin{align}\label{eq:avc}
C_i = \sum_{j=1}^{K_i} \theta_{k_j}\prod_{m=1}^{j-1} (1-\theta_{k_m}) \cdot \operatorname{\mathbb{E}}_{I_{i,k_j}}\left[C_{i,k_j}\right]
\end{align}
where $\theta_{k_j}\prod_{m=1}^{j-1} (1-\theta_{k_m})$ is the probability that SU $i$ accesses channel $k_j \in \mathcal{K}_i^{\textrm{ord}}$ given the ordered set $\mathcal{K}_i^{\textrm{ord}}$, and $\operatorname{\mathbb{E}}_{I_{i,k_j}}\left[C_{i,k_j}\right]$ is the expected value of the  capacity achieved by SU $i$ over channel $k_j$ with the expectation taken over the distribution of the total interference $I_{i,k_j}$ experienced on channel $k_j$ by SU $i$ from the SUs in $\mathcal{N} \setminus \{i\}$.

For evaluating the capacity in (\ref{eq:avc}), every SU $i\in \mathcal{N}$ must have perfect knowledge of the channels that the other SUs are using, as well as the order in which these channels are being sensed and accessed (to compute the expectation) which is quite difficult in a practical network. To alleviate the information needed for finding the average capacity, some works such as \cite{MM00} and \cite{LA00} consider, in (\ref{eq:avc}), the capacities under the worst case interference, instead of the expectation over the interference. However, applying this assumption in our case requires considering the capacities under worst case interference on \emph{every} channel for every SU $i$ which is quite restrictive. Thus, in our setting, as an alternative to the expectation in (\ref{eq:avc}), for any SU $i \in \mathcal{N}$ we consider the capacity $\bar{C}_{i,k_j}$ achieved over channel $k_j \in \mathcal{K}^{\textrm{ord}}_i$ under the average interference resulting from the SUs in $\mathcal{N} \setminus \{i\}$, given by
\begin{align}\label{eq:capint}
\bar{C}_{i,k_j}= \log_2{(1+\Gamma_{i,k_j})}.
\end{align}
 Here, $\Gamma_{i,k_j}$ is the SINR achieved by SU $i$ when using channel $k_j$ given an average total interference $\bar{I}_{i,k_j}$ arising from the SUs in $\mathcal{N} \setminus \{i\}$ and is given by
\begin{align}
\Gamma_{i,k_j}= \frac{g_{i,k_j} \cdot  P_{i,k_j}}{\sigma^2 + \bar{I}_{i,k_j}},
\end{align}
where $P_{i,k_j}$ is the maximum transmit power of SU $i$ used on channel $k_j$, and $\sigma^2$ is the variance of the Gaussian noise. In the non-cooperative setting, $P_{i,k_j}=\tilde{P}$ where $\tilde{P}$ is the maximum transmit power of any SU ( $\tilde{P}$ is assumed to be the same for all SUs with no loss of generality). In a practical cognitive network, through measurements, any SU $i\in\mathcal{N}$ can obtain from its receiver an estimate of the average total interference $\bar{I}_{i,k_j}$ experienced on any channel $k_j \in \mathcal{K}^{\textrm{ord}}_i$ \cite{PROAKIS}, and, thus, SU $i$ is able to evaluate the capacity in (\ref{eq:capint}). By using (\ref{eq:capint}), we define the average capacity $\bar{C}_i$ in a manner analogous to (\ref{eq:avc}) as follows:
\begin{align}\label{eq:avc2}
\bar{C}_i = \sum_{j=1}^{K_i} \theta_{k_j}\prod_{m=1}^{j-1} (1-\theta_{k_m}) \cdot\bar{C}_{i,k_j}.
\end{align}
Clearly, given the measurement of the external interference, every SU $i$ can easily evaluate its capacity in (\ref{eq:avc2}). Due to properties such as Jensen's inequality,  (\ref{eq:avc2}) represents a lower bound of (\ref{eq:avc}) but it provides a good indicator of the access performance of the SUs. Hereafter, we solely deal with capacities given the measured average interference.

Consequently, the non-cooperative utility achieved by any SU $i\in\mathcal{N}$ per slot is given by
\begin{align}\label{eq:utilnc}
u(\{i\},\mathcal{N}) = \bar{C}_i \cdot (1-\tau_i),
\end{align}
where the dependence on $\mathcal{N}$ indicates the dependence of the utility on the external interference when the SUs are non-cooperative, $\tau_i$ is the fraction of time used for sensing given by (\ref{eq:sensetime}), and $\bar{C}_i$ the average capacity given by (\ref{eq:avc2}). This utility captures the tradeoff between exploring the spectrum, i.e., sensing time, and exploiting the best spectrum opportunities, i.e., capacity achieved during spectrum access.

\section{Joint Spectrum Sensing and Access Through Cooperation}\label{sec:jssa}
In this section, we propose a cooperative scheme that enables the SUs to share their knowledge of the radio spectrum and improve their spectrum sensing and access performance, jointly.

\subsection{Cooperative Sharing of Channel Knowledge}
To improve their joint sensing and access performance, the SUs in the cognitive network can cooperate. Hence, any group
\begin{algorithm}[e]
\caption{\footnotesize Proposed sorting algorithm for any coalition $S \subseteq \mathcal{N}$}
\label{alg:sort}
\begin{algorithmic}
\footnotesize
\STATE $\mathcal{Q}_{i,0}\leftarrow\emptyset$
\FOR[For rank $r=1$ we find all the channels that SUs in $S$ sense first, for $r=2$ the channels that they sense second, and so on.]{$r=1$ to $K_S$}
\STATE $\mathcal{Q}_{i,r} \leftarrow \mathcal{Q}_{i,r-1}$, $\mathcal{K}_{r,S}\leftarrow\emptyset$, $\mathcal{R}_{r}\leftarrow\emptyset$
\STATE For rank $r$, each SU $i \in S$ proposes to select the channel $l_i^{r}$ in $\mathcal{K}_S \setminus \mathcal{Q}_{i,r}$ which has the highest weight, i.e., $l_i^{r}=\underset{k\in\mathcal{K}_S \setminus\mathcal{Q}_{i,r}}{\operatorname{arg\,max}}w_{i,k}$.
\FORALL {$i\in S \textrm{ s. t. } l_i^{r} \neq l_j^{r},\ \forall j \in S,\ i \neq j$}
\STATE SU $i$ fixes its selection for this rank, and, hence:
\STATE $\mathcal{Q}_{i,r} \leftarrow \mathcal{Q}_{i,r} \cup l_i^{r}$, $\mathcal{K}_{r,S} \leftarrow \mathcal{K}_{r,S} \cup l_i^{r}$, $\mathcal{R}_{r} \leftarrow \mathcal{R}_{r} \cup \{i\}$.
\ENDFOR
\FORALL {$G \subseteq S \setminus \mathcal{R}_{r} , \textrm{ s. t. } l_i^{r} = l_j^{r}=l_G^{r},\ \forall i,j \in G$}
\STATE a) The SU $j \in G$ which has the highest weight for $l_G^{r}$, i.e.,  $j =\underset{j\in G}{\operatorname{arg\,max}}\ w_{j,k_G^{r}}$, selects channel $l_G^{r}$ for rank $r$.
\STATE b)  $\mathcal{Q}_{j,r} \leftarrow \mathcal{Q}_{j,r} \cup l_G^{r}$, $\mathcal{K}_{r,S} \leftarrow \mathcal{K}_{r,S} \cup l_G^{r}$, $R_{r} \leftarrow R_{r} \cup \{j\}$.
\IF[SUs with unselected channels for $r$ exist]{$\mathcal{R}_r \neq S$}
\STATE The SUs in $S \setminus \mathcal{R}_r $ repeat the previous procedure, but each SU $i \in S \setminus \mathcal{R}_r $, can only use the channels in $\mathcal{K_S} \setminus\mathcal{K}_{r,S} \cup \mathcal{Q}_{i,r}$. However, if for any SU $i \in S \setminus \mathcal{R}_r $, we have $\mathcal{K_S} \setminus \mathcal{K}_{r,S} \cup \mathcal{Q}_{i,r} = \emptyset$, then this SU will simply select the channel that will maximize its weight from the set $\mathcal{K_S} \setminus \mathcal{Q}_{i,r}$, regardless of the other SUs selection.
\ENDIF
\ENDFOR
\ENDFOR
\end{algorithmic}
\end{algorithm}
\noindent of SUs can cooperate by forming a \emph{coalition} $S \subseteq \mathcal{N}$ in order to: (i)- improve their sensing times and learn the presence of channels with better conditions by exchanging information on the statistics of their known channels, (ii)- jointly coordinate the order in which the channels are accessed to reduce the mutual interference, and (iii)- share their instantaneous sensing results to improve their capacities by distributing their total power over multiple channels, when possible.

First and foremost, whenever a coalition $S$ of SUs forms, its members exchange their knowledge on the channels and their statistics. Hence, the set of channels that the coalition is aware of can be given by $\mathcal{K}_S = \cup_{i\in S} \mathcal{K}_i$ with cardinality $|\mathcal{K}_S|=K_S$. By sharing this information, each member of $S$ can explore a larger number of channels, and, thus, can improve its sensing time by learning channels with better availability and by reducing the second term in (\ref{eq:sensetime}).  Moreover, as a result of sharing the known channels, some members of $S$ may be able to access the spectrum with better channel conditions, thereby, possibly improving their capacities as well.

\subsection{Proposed Algorithm for Cooperative Interference Management and Channel Sorting}
Once the coalition members share their knowledge about the channels, the SUs will jointly coordinate their order of access over the channels in $\mathcal{K}_S$ in order to minimize the probability of interfering with each other. In this context, analogously to the non-cooperative case, the SUs in $S$ proceed by assigning different weights on the channels in $\mathcal{K}_S$ using (\ref{eq:w}). Then, the SUs in coalition $S$ \emph{cooperatively} sort their channels, in a manner to reduce interference as much as possible. Thus, the SUs jointly \emph{rank} their channels on a rank scale from $1$ (the first channel to sense) to $K_S$ (the last channel to sense). For every SU $i\in S$, let $\mathcal{Q}_{i,r}$ denote the set of channels that SU $i$ has selected \emph{until} and including rank $r$. Further, we denote by $\mathcal{R}_r$ the set of SUs that have selected a channel for rank $r$ and by $\mathcal{K}_{r,S}$ the set of channels that have been  selected for rank $r$ by members of $S$. Given this notation (summarized in Table~\ref{tab:notalg1}), we propose the sorting procedure in Algorithm~\ref{alg:sort} for any coalition $S$.
\begin{table}[e]
\scriptsize
\caption{Summary of the Notation Specific for Algorithm~\ref{alg:sort}}
\centering
\begin{tabular}{|c | c|}
  \hline
  \textbf{Notation} &  \textbf{Explanation}  \\
  \hline\hline
$\mathcal{Q}_{i,r}$  &  Set of channels that an SU $i$ has selected in Algorithm~\ref{alg:sort} \emph{until} and including a channel rank $r$  \\ \hline
$\mathcal{R}_r$ &  Set of SUs that have selected a channel for rank $r$ in Algorithm~\ref{alg:sort}  \\ \hline
$\mathcal{K}_{r,S}$ &  Set of channels that have been  selected for rank $r$ by members of $S$  \\ \hline
$c_i^{r}$&  Channel selected by an SU $i$ (member of coalition $S$) at rank $r$ during  Algorithm~\ref{alg:sort}    \\ \hline
  \end{tabular}\label{tab:notalg1}
\end{table}

Essentially, in order to apply Algorithm~\ref{alg:sort}, the members of a cooperative coalition $S$ proceed as follows. First, every SU in coalition $S$ starts by applying the non-cooperative weighting procedure over the set of channels $\mathcal{K}_S$. Initially, the SU already performs this weighting procedure when acting non-cooperatively. Subsequently, the SUs share their current ordering of the channels over a signalling channel, e.g., a temporary ad hoc channel which is commonly used in ad hoc cognitive radio~\cite{ND01}. The SUs cooperatively inspect the received rankings of channels while proceeding sequentially by rank (i.e., they check the top ranked channel of all SUs first, then move to the next rank, and so on). At a given rank $r$, all SUs that have chosen a certain channel that does not conflict with the choices of the other SUs will actually be offered this channel by the coalition. In contrast, if, at a given rank $r$, the cooperating SUs find out that the same channel has been selected by a set of SUs $G \subseteq S$, then,  SU $j \in G$ with the highest weight is assigned this channel at rank $r$. Subsequently this SU $j$ will no longer participate in bidding for channels at rank $r$. As long as there exist SUs that have not made their channel selection at rank $r$, i.e., $\mathcal{R}_r \neq S$, then these SUs repeat the same procedure as above (i.e., re-rank their channels and manage conflicting channel selections by offering the channel to the SU that values it the most, i.e., with the highest weight) but can use only the channels that their partners have not already selected at rank $r$. However, whenever an SU $i \in S \setminus \mathcal{R}_r $ can no longer choose a channel not used by the others at rank $r$, it is inevitable that this SU $i$ interferes with some of its partners at rank $r$, then  SU $i$ simply selects, at rank $r$, the channel in $\mathcal{K_S} \setminus \mathcal{Q}_{i,r}$ with the highest weight. Hence, in summary, the SUs inside a certain coalition $S$ can use Algorithm~\ref{alg:sort} to share their valuation or ranking of the channels and, subsequently, coordinate their order of access over these channels so as to avoid interference. As a result of the sorting process, each SU $i \in S$ will have an \emph{ordered set of channels}  $\mathcal{K}_i^S$ of cardinality $K_S$ which reflects the result of Algorithm~\ref{alg:sort}.

Note that, in order to implement Algorithm~\ref{alg:sort}, each SU in coalition $S$ needs to share its own ranking of the channels in $\mathcal{K}_S$. Essentially, this ranking is the only information that needs to be exchanged so that the SUs can execute Algorithm~\ref{alg:sort}. Following this exchange, the cooperating SUs would combine these channel rankings to modify and update the overall ranking so as to reduce interference based on  Algorithm~\ref{alg:sort}. We do note that, throughout this process, the SUs would need to regularly exchange this ranking information. In practice, this information exchange can be done over a signalling channel such as the temporary ad hoc channel and it will not require a large overhead since the SUs need to only share the ``ranking'' of the channels and not the way in which they actually rank the channels (i.e., they do not need to reveal their ranking method nor exchange their channel gains or locations). This overhead is also reduced by the fact that, at a given rank $r$, the SUs that have already obtained their channel assignment do not need to further share the information exchange with the remaining SUs.

Given this new ordering resulting from the sorting procedure of Algorithm~\ref{alg:sort}, for every SU $i \in S$, the total average sensing time $\tau_i^S$ will still be expressed by (\ref{eq:sensetime}). However, the sensing time  $\tau_i^S$ is a function of the channel ordering based on the set $\mathcal{K}_i^S$ which is ordered cooperatively, rather than $\mathcal{K}^{\textrm{ord}}_i$ which is the non-cooperative ordering.

Using Algorithm~\ref{alg:sort}, the SUs that are members of the same coalition are able to reduce the interference on each other, by minimizing the possibility of selecting the same channel at the same rank (although they can still select the same channel but at different ranks). However, as a result of this joint sorting, some SUs might need to give a high rank to some channels with lower weights which can increase the sensing time of these SUs. Hence, this cooperative sorting of the channels highlights the fact that some SUs may trade off some gains in sensing performance (obtained by sharing channel statistics) for obtaining access gains (by avoiding interference through joint sorting). As we will see later in this section, in addition to the interference reduction, some SUs in a coalition $S$ can also obtain access gains by using multiple channels simultaneously.

\subsection{Cooperative Power Allocation and Coalitional Utility}
For every coalition $S$, we define $\mathcal{B}_S=\{b_1,\ldots,b_{|S|}\}$ as the tuple with every element $b_i$ representing a channel in $\mathcal{K}_i^S$ selected by SU $i \in S$. Denote by $\mathfrak{F}_S$ the family of all such tuples for coalition $S$ which corresponds to the family of all permutations, with repetition, for the SUs in $S$ over the channels in $\mathcal{K}_S$. Each tuple  $\mathcal{B}_S \in \mathfrak{F}_S$ is chosen by SUs in $S$ with a certain probability $p_{\mathcal{B}_S}$ given by
\begin{align}\label{eq:proba}
p_{\mathcal{B}_S} = \begin{cases} \prod_{k \in \cup_{i=1}^{|S|}b_i,\ b_i \in \mathcal{B}_S} \theta_k \prod_{j \in \cup_{i=1}^{|S|}\mathcal{K}_{i,b_i}^S} (1-\theta_j), & \mbox{if } \cup_{i=1}^{|S|}b_i \cap \cup_{i=1}^{|S|}\mathcal{K}_{i,b_i}^S = \emptyset \\ 0,
\mbox{ otherwise;}
\end{cases}
\end{align}
where, for any SU $i\in S$, the set $\mathcal{K}_{i,b_i}^S =\{ j \in \mathcal{K}_i^S| \textrm{ rank}(j) < \textrm{rank}(b_i)\}$ represents the set of channels that need to be busy  before SU $i$ selects channel $b_i \in \mathcal{B}_S$, i.e., the set of channels ranked higher than $b_i$ (recall that the set $\mathcal{K}_i^S$ is ordered as a result of Algorithm~\ref{alg:sort}). If $\cup_{i=1}^{|S|}b_i \cap \cup_{i=1}^{|S|}\mathcal{K}_{i,b_i}^S \neq \emptyset$, it implies that, for the selection $\mathcal{B}_S$,  a channel needs to be available and busy at the same time which is impossible, and, hence, the probability of selecting any tuple  $\mathcal{B}_S \in \mathfrak{F}_S$ having this property is $0$. Due to this property, the SUs of any coalition $S \subseteq \mathcal{N}$, can only achieve a transmission capacity for the tuples $\mathcal{B}_S \in \bar{\mathfrak{F}}_S$ where $\bar{\mathfrak{F}}_S$ is the family of all \emph{feasible} tuples for coalition $S$ such that $\cup_{i=1}^{|S|}b_i \cap \cup_{i=1}^{|S|}\mathcal{K}_{i,b_i}^S = \emptyset$, which corresponds to the tuples which have a non-zero probability of occurrence as per (\ref{eq:proba}). 
Note that, the tuple corresponding to the case in which no SU $i\in S$ finds an unoccupied channel also has a non-zero probability, but is omitted as its corresponding capacity is $0$ and, thus, it has no effect on the utility.

For every channel selection $\mathcal{B}_S \in \bar{\mathfrak{F}}_S$, one can partition coalition $S$ into a number of \emph{disjoint} sets $\{S_1,\ldots,S_L\}$ with $\cup_{l=1}^{L} S_l = S$ such that, for a given $l \in \{1,\ldots,L\}$, the channels in $\mathcal{B}_S$ selected by any $i\in S_l$ are of the same rank. Thus, the SUs belonging to any $S_l$ access their selected channels \emph{simultaneously} and, for this reason, they can coordinate their channel access. In the event where $|S_l| =1$, the SU in $S_l$ simply transmits using its maximum power $\tilde{P}$ over its selected channel in $\mathcal{B}_S$. In contrast, for any $l \in \{1,\ldots,L\}$ with $|S_l| > 1$, the SUs in $S_l$ can share their sensing results (since they find their available channels simultaneously) and improve their access performance by distributing their powers cooperatively over the channels in the set $\mathcal{K}_{S_l}$ that corresponds to the channels selected by $S_l$ given  $\mathcal{B}_S $. With every SU $i\in S_l$, we associate a $1\times |\mathcal{K}_{S_l}|$ vector $\boldsymbol{P}_i^{\mathcal{B}_S} $ where each element $P_{i,k}^{\mathcal{B}_S }$ represents the power that SU $i \in S_l$ will use on channel $k \in \mathcal{K}_{S_l}$ given the selection $\mathcal{B}_S $. Let $\boldsymbol{P}_{ \mathcal{K}_{S_l}}^{\mathcal{B}_S} = [\boldsymbol{P}_1^{\mathcal{B}_S}\ \ldots \ \boldsymbol{P}^{\mathcal{B}_S}_{|S_l|}]^{T}$. Hence, for every $S_l,\ l\in\{1,\ldots,L\}$ such that  $|S_l| > 1$, the SUs can distribute their powers so as to maximize the total sum-rate that they achieve as a coalition, i.e., the social welfare, by solving \footnote{This choice allows us to capture both the selfish nature of the SUs (improving their individual utility using a competitive coalition formation process) and the cooperative nature of a coalition (in which the SUs act together for the overall benefit of the coalition using a fully cooperative social optimum at coalitional level). Other advanced optimization or game theoretic methods such as non-cooperative Nash equilibrium or Nash bargaining can also be used. However, these solutions can increase complexity and are out of the scope of this paper and will be addressed separately in future work.}:
\begin{align}\label{eq:opt}
\max_{\boldsymbol{P}_{ \mathcal{K}_{S_l}}^{\mathcal{B}_S} }\sum_{i \in S_l} \sum_{k \in  \mathcal{K}_{S_l}}  C_{i,k},
\end{align}
\begin{align*}
\textrm{ s.t. }
P_{i,k}^{\mathcal{B}_S} \ge 0,\ \forall i \in S_l, k \in \mathcal{K}_{S_l},\ \sum_{k \in \mathcal{K}_{S_l}} P_{i,k}^{\mathcal{B}_S} = \tilde{P},\ \forall i \in S_l,
\end{align*}
with $\tilde{P}$ the maximum transmit power and  $C_{i,k}$ the capacity achieved by SU $i \in S_l$ over channel $k \in \mathcal{K}_{S_l}$ and is given by
\begin{align}
 C_{i,k} =  \log{\left(1+\frac{P_{i,k}^{\mathcal{B}_S} \cdot g_{i,k}}{\sigma^2+I^{S_l}_{i,k} + I^{S\setminus S_l}_{i,k}+\bar{I}_{S,k} }\right)}
\end{align}
where $I^{S_l}_{i,k}  = \sum_{j \in S_l, j \neq i} g_{j,k} P_{j,k}^{\mathcal{B}_S}$ is the interference between SUs in $S_l$ on channel $k \in \mathcal{K}_{S_l}$, and $I^{S\setminus S_l}_{i,k}  = \sum_{j \in S\setminus S_l} g_{j,k} P_{j,k}^{\mathcal{B}_S}$ is the interference from SUs in $S \setminus S_l$ on channel  $k \in \mathcal{K}_{S_l}$ (if any). Further, $\bar{I}_{S,k}$ represents the average interference experienced by the members of coalition $S$, including SU $i$ from the SUs \emph{external} to $S$, which, given a partition $\Pi$ of $\mathcal{N}$ with $S \in \Pi$, corresponds to the SUs in $\mathcal{N} \setminus S$ (which can also be organized into coalitions as per $\Pi$). Similarly to the non-cooperative case, this average external interference can be estimated through measurements from the receiver (the receiver can inform every SU in $S$ of the interference it perceives, and then the SUs in $S$ can easily deduce the interference from the external sources).

Subsequently, given that, for any $\mathcal{B}_S \in \bar{\mathfrak{F}}_S$, $S$ is partitioned into $\{S_1,\ldots,S_L\}$ as previously described, the average capacity achieved, when acting cooperatively, by any SU $i \in S$, with $i \in S_l,\ l\in\{1,\ldots,L\}$ (for every $\mathcal{B}_S \in \bar{\mathfrak{F}}_S$),  is
\begin{align}\label{eq:capacityX}
\bar{C}_i^{S} = \sum_{\mathcal{B}_S \in \bar{\mathfrak{F}}_S} p_{\mathcal{B}_S} \cdot C_{i}^{\mathcal{B}_S},
\end{align}
where $p_{\mathcal{B}_S}$ is given by (\ref{eq:proba}), and $C_{i}^{\mathcal{B}_S}$ is the total capacity achieved by SU $i \in S_l$ when the SUs in $S$ select the channels in $\mathcal{B}_S$ and is given by
\begin{align}
C_{i}^{\mathcal{B}_S} =  \sum_{k \in \mathcal{K}_{S_l}}  C_{i,k}^{\mathcal{B}_S},
\end{align}
where $\mathcal{K}_{S_l} \subseteq \mathcal{K}_S$ is the set of channels available to $S_l \subseteq S$. Further, $ C_{i,k}^{\mathcal{B}_S}$ is the capacity achieved by SU $i \in S_l$ on channel $k\in \mathcal{K}_{S_l}$ given the channel selection $\mathcal{B}_S$  and is a direct result (upon computing the powers) of (\ref{eq:opt}) which is a standard constrained optimization problem that can be solved using well known methods \cite{BO00}.

Hence, the utility of any SU $i$ in coalition $S$ is given by
\begin{align}\label{eq:coop}
v_i(S,\Pi) = \bar{C}_i^{S} (1-\tau_i^S)
\end{align}
where $\Pi$ is the network partition currently in place which determines the external interference on coalition $S$, and $\tau_i^S$ is given by (\ref{eq:sensetime}) using the set $\mathcal{K}_i^S$ which is ordered by SU $i$, cooperatively with the SUs in $S$, using Algorithm~\ref{alg:sort}. Note that the utility in (\ref{eq:coop}) reduces to (\ref{eq:utilnc}) when the network is non-cooperative. Finally, we remark that, although cooperation can benefit the SUs both in the spectrum sensing and spectrum access levels, in many scenarios forming a coalition may also entail \emph{costs}. From a spectrum sensing perspective, due to the need for re-ordering the channels to reduce the interference, the sensing time of some members of a coalition may be longer than their non-cooperative counterparts. From a spectrum access perspective, by sharing information, some SUs may become subject to new interference on some channels (although reduced by the sorting algorithm) which may degrade their capacities. Thus, there exists a number of tradeoffs for cooperation, in different aspects for both sensing and access. In this regard, clearly, the utility in (\ref{eq:coop}) adequately captures these tradeoffs through the gains (or costs) in sensing time (spectrum sensing), and the gains (or costs) in capacity (spectrum access).

In a nutshell, with these tradeoffs, for maximizing their utilities in (\ref{eq:coop}), the SUs can cooperate to form coalitions, as illustrated in Fig.~\ref{fig:ill} for a network with $N=8$ and $K=10$. Subsequently, the next section provides an analytical framework to form coalitions such as in Fig.~\ref{fig:ill}.\vspace{-0.2cm}
\section{Joint Spectrum Sensing and Access as a Coalitional Game in Partition Form}\label{sec:game}
In this section, we cast the proposed joint spectrum sensing and access cooperative model as a coalitional game in partition form and we devise an algorithm for coalition formation.
\subsection{Coalitional Games in Partition Form: Basics}
For the purpose of deriving an algorithm that allows the SUs to form coalitions such as in Fig.~\ref{fig:ill} in a distributed manner, we use notions from cooperative game theory \cite{Game_theory2}. In this regard, denoting by $ \mathfrak{P}$ the set of all partitions of $\mathcal{N}$, we formulate the joint spectrum sensing and access model of the previous section as a coalitional game in \emph{partition form} with non-transferable utility which is defined as follows \cite{Game_theory2,WS00}:
\begin{definition}
A coalitional game in \emph{partition form} with \emph{non-transferable} utility~(NTU) is defined by a pair $(\mathcal{N},V)$ where  $\mathcal{N}$ is the set of players and $V$ is a mapping such that for every partition $\Pi \in  \mathfrak{P}$ , and every coalition $S \subseteq \mathcal{N},\ S\in \Pi$, $V(S,\Pi)$ is a closed convex subset of $\mathbb{R}^{|S|}$  that contains the payoff vectors that players in $S$ can achieve.
\end{definition}
\begin{figure}[e]
\begin{center}
\includegraphics[width=100mm]{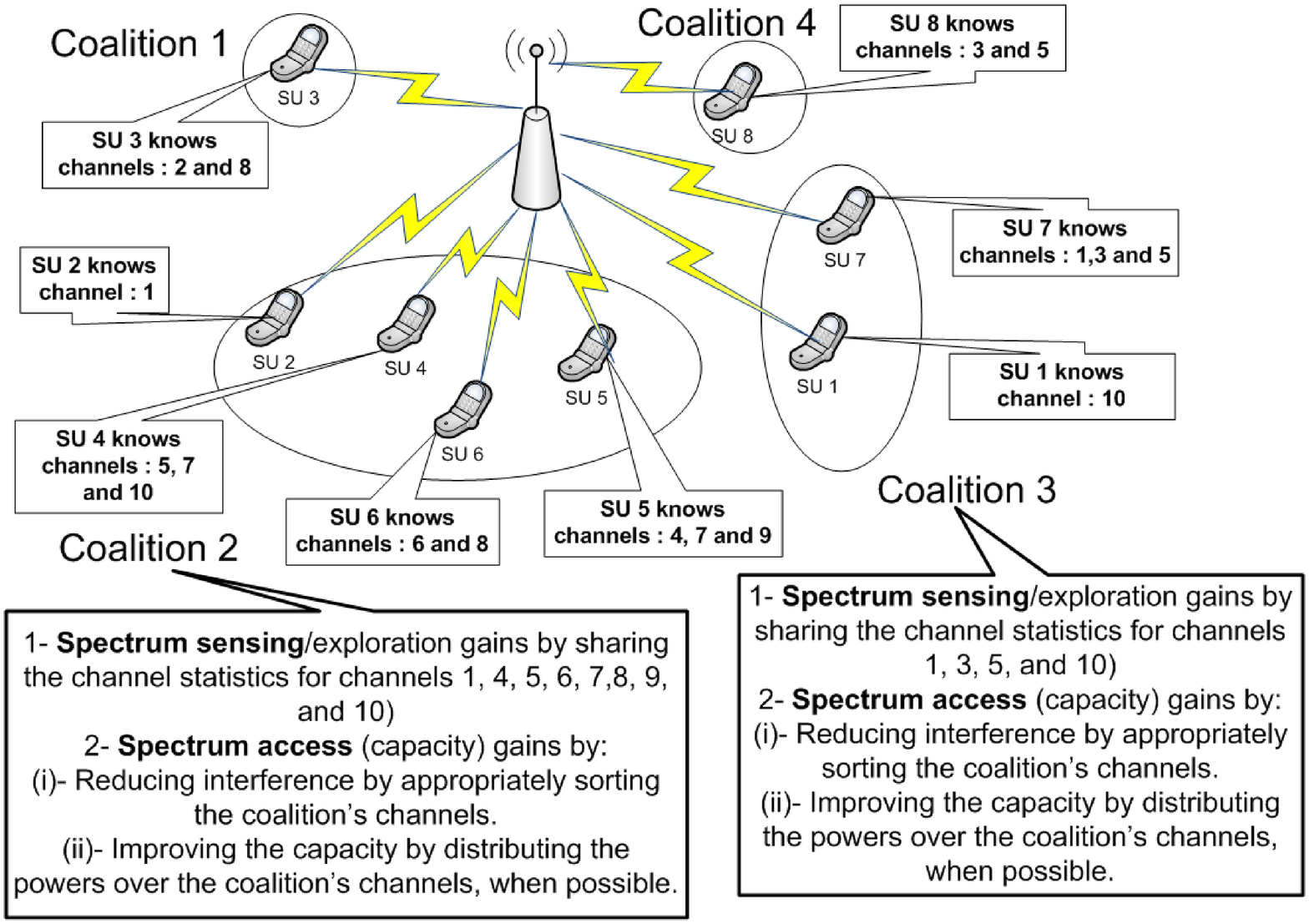}
\end{center}\vspace{-0.6cm}
\caption {An illustrative example of coalition formation for joint spectrum sensing and access for $N=8$ SUs and $K=10$ channels.} \label{fig:ill}
\end{figure}

Hence, a coalitional game is in partition form if, for any coalition $S \subseteq \mathcal{N}$, the payoff of every player in the coalition depends on the partition $\Pi$, i.e., on the players in $S$ as well as on the players in $\mathcal{N} \setminus S$. Further, the game has NTU if the utility received by $S$ cannot be expressed by a single value which can be arbitrarily divided among the coalition members, but is rather expressed as a set of vectors representing the payoffs that each member of $S$ can achieve when acting within $S$.

\subsection{Joint Sensing and Access as a Coalitional Game in Partition Form}
For the proposed joint spectrum sensing and access problem, given a partition $\Pi$ of $\mathcal{N}$ and a coalition $S \in \Pi$, and denoting by $x_i(S,\Pi)$ the payoff of SU $i \in S$ received when acting in coalition $S$ when $\Pi$ is in place, we define the coalitional value set, i.e., the mapping $V$ as follows:
\begin{align}\label{eq:utilcoal}
V(S,\Pi)=\{\boldsymbol{x}(S,\Pi) \in \mathbb{R}^{|S|}|\forall i \in S,x_i(S,\Pi) = v_i(S,\Pi)\},
\end{align}
where $v_i(S,\Pi)$ is given by (\ref{eq:coop}). Using (\ref{eq:utilcoal}), we note:
\begin{remark}
The proposed joint spectrum sensing and access game can be modeled as a $(\mathcal{N},V)$ coalitional game in partition form with non-transferable utility where the mapping $V$ is a singleton set as given by (\ref{eq:utilcoal}), and, hence, is a closed and convex subset of  $\mathbb{R}^{|S|}$.
\end{remark}

Coalitional games in partition form have  recently attracted attention in game
theory~\cite{Game_theory2,DR00,KA01,WS00,DM00,Lazlo1,Lazlo2}. Partition form games are characterized by the dependence of the payoffs on externalities, i.e., on the way the network is partitioned. Unlike coalitional games in characteristic form for which the focus is on studying the stability of the grand coalition of all players \cite{Game_theory2}, games in partition form provide a richer and more complex framework since any coalitional structure can be optimal \cite{DR00,WS00,Lazlo1,Lazlo2}. In this regard, coalitional games in partition form are often classified as \emph{coalition formation} games \cite{WS00}. Hence, traditional solution concepts for coalitional games, such as core-stable partitions or the Shapley value\cite{Game_theory2}, are inapplicable to coalitional games in partition form \cite{Game_theory2,DR00,WS00,Lazlo1,Lazlo2}. For instance, for coalition formation games in partition form, there is a need for devising algorithms to form the coalitional structure that can potentially emerge in the network. In particular, for the proposed joint spectrum sensing and access coalitional game, due to the tradeoffs between the benefits and costs of cooperation as captured by (\ref{eq:coop}) and explained in Section~\ref{sec:jssa}, we note the following:
\begin{remark}
In the proposed joint spectrum sensing and access $(\mathcal{N},V)$  coalitional game in partition form, due to the dependence on externalities and the benefit-cost tradeoffs from cooperation as expressed in (\ref{eq:coop}) and (\ref{eq:utilcoal}), any coalitional structure may form in the network and the grand coalition is seldom beneficial due to increased costs. Hence, the proposed joint sensing and access game is classified as a coalition formation game in partition form.
\end{remark}

\subsection{Proposed Preference Relations}
Most coalition formation algorithms in the game theory literature \cite{DR00,WS00} are built for games in characteristic form. Although some approaches for the partition form are presented in \cite{DR00,Lazlo1,Lazlo2}, most of these are targeted at solving problems in economics with utilities quite different from the one dealt with in this paper. Moreover, the approaches using the recursive core in \cite{Lazlo1,Lazlo2} (which are an extension to the classical characteristic form solutions such as core-stable partitions) rely heavily on combinatorial techniques which are unsuitable for wireless problems such as in cognitive radio. In order to build a coalition formation algorithm suitable for joint spectrum sensing and access, we borrow concepts from \cite{HC00}, in which the players build coalitions based on preferences, and extend them to accommodate the partition form.

\begin{definition}
For any SU $i\in \mathcal{N}$, a \emph{preference relation} or \emph{order} $\succeq_i$ is defined as a complete, reflexive, and transitive binary relation over the set of all coalition/partition pairs that SU $i$ can be a member of, i.e., the set $\{(S_k,\Pi) | S_k \subseteq \mathcal{N},\ i \in S_k,\ S_k \in \Pi, \ \Pi \in \mathfrak{P}\}$.
\end{definition}

Consequently, for any SU $i \in \mathcal{N}$, given two coalitions and their respective partitions $S_1 \subseteq \mathcal{N},\ S_1 \in \Pi$ and, $S_2 \subseteq \mathcal{N}, \  S_2\in \Pi^{\prime}$ such that $i \in S_1$ and $i \in S_2$,  $(S_1,\Pi) \succeq_i (S_2,\Pi^{\prime})$ indicates that player $i$ prefers to be part of coalition $S_1$ when $\Pi$ is in place, over being part of coalition $S_2$ when $\Pi^{\prime}$ is in place, or at least, $i$ prefers both coalition/partition pairs equally. Further, using the asymmetric counterpart of $\succeq_i$, denoted by $\succ_i$, then $(S_1,\Pi) \succ_i (S_2,\Pi^{\prime})$, indicates that player $i$ \emph{strictly} prefers being a member of $S_1$ within $\Pi$ over being a member of $S_2$ with $\Pi^{\prime}$. We also note that the preference relation can be used to compare two coalitions in the same partition, or the same coalition in two different partitions.

For every application, an adequate preference relation $\succeq_i$ can be defined to allow the players to quantify their preferences depending on their parameters of interest. In this paper, we propose the following preference relation for any SU $i\in \mathcal{N}$:
\begin{align}\label{eq:prefsu}
(S_1,\Pi) \succeq_i (S_2,\Pi^{\prime}) \Leftrightarrow \phi_i(S_1,\Pi) \ge \phi_i(S_2,\Pi^{\prime})
\end{align}
where $S_1 \in \Pi,\ S_2 \in \Pi^{\prime}$, with $\Pi, \Pi^{\prime} \in \mathfrak{P}$, are any two coalitions that contain SU $i$, i.e., $i \in S_1$ and $i \in S_2$ and $\phi_i$ is a preference function defined for any SU $i\in \mathcal{N}$ as follows ($S$ is a coalition containing $i$)
\begin{align}\label{eq:pref1}
\phi_i(S,\Pi) = \begin{cases} x_i(S,\Pi), & \mbox{if } \left(x_j(S,\Pi) \ge x_j(S \setminus \{i\},\Pi),\forall j \in S\setminus\{i\}\ \&\ S \notin h(i)\right) \mbox { or } (|S|=1)  \\ 0, &\mbox{otherwise}, \end{cases}
\end{align}
where $x_i(S,\Pi)$ is given by (\ref{eq:coop}) through (\ref{eq:utilcoal}) and it represents the payoff received by SU $i$ in coalition $S$ when partition $\Pi$ is in place, and $h(i)$ is the history set of SU $i$ which is a set that contains the coalitions of size larger than $1$ that SU $i$ was member of (visited) in the past, and has parted.

The main rationale behind the preference function $\phi_i$ is that any SU $i$ assigns a preference equal to its achieved payoff for any coalition/partition pair $(S,\Pi)$ such that either: (i)- $S$ is the singleton coalition, i.e., SU $i$ is acting non-cooperatively, or (ii)- the presence  of SU $i$ in coalition $S$ is not detrimental to any of the SUs in $S \setminus \{i\}$, and coalition $S$ has not been previously visited by SU $i$, i.e., is not in the history $h(i)$. Otherwise, the SU assigns a preference value of $0$ to any coalition whose members' payoffs decrease due to the presence of $i$, since such a coalition would refuse to have $i$ join the coalition. Also, any SU $i$ assigns a preference of $0$ to any coalition that it has already visited in the past and has \emph{left} since an SU $i$ has no incentive to revisit a coalition it has previously left due to a decrease in the utility.

Having defined the main ingredients of the proposed game, in the next subsection, we devise an algorithm for coalition formation.
\subsection{Proposed Algorithm: Coalition Formation Rule and Algorithm Phases}\label{sec:coalalg}
In order to devise a coalition formation algorithm based on the SUs' preferences, we propose the following rule:
\begin{definition}\label{def:switch}
\textbf{Switch Rule -} Given a partition $\Pi=\{S_1,\ldots,S_M\}$ of the set of SUs $\mathcal{N}$, an SU $i$ decides to leave its current coalition $S_m,\ $ for some $m \in \{1,\ldots,M\}$ and join another coalition $S_k \in \Pi \cup \{\emptyset\},\ S_k \neq S_m$, hence forming $\Pi^{\prime} = \{\Pi \setminus \{S_m,S_k\}\} \cup \{S_m\setminus\{i\},S_k\cup\{i\}\}$, if and only if $(S_k \cup \{i\},\Pi^{\prime}) \succ_i (S_m,\Pi)$. Hence, $\{S_m,S_k\} \rightarrow \{S_m\setminus\{i\},S_k\cup\{i\}\}$ and $\Pi \rightarrow \Pi^{\prime}$.
\end{definition}

For any partition $\Pi$, the switch rule provides a mechanism whereby any SU can leave its current coalition $S_m$ and join another coalition $S_k \in \Pi$, forming a new partition $\Pi^{\prime}$, given that the new pair $(S_k \cup \{i\},\Pi^{\prime})$ is strictly preferred over $(S_m,\Pi)$ through the preference relation defined by (\ref{eq:prefsu}) and (\ref{eq:pref1}). That is, an SU would \emph{switch} to a new coalition if it can strictly improve its payoff, \emph{without} decreasing the payoff of any member of the new coalition. Thus, the switch rule can be seen as an individual decision made by an SU, to move from its current coalition to a new coalition while improving its payoff, given the \emph{consent} of the members of this new coalition as per (\ref{eq:prefsu}). Further, whenever an SU decides to switch from its current coalition $S_m \in \Pi$ to join a different coalition, coalition $S_m$ is stored in its history set $h(i)$ (if $|S_m| >1$). 

Consequently, we propose a coalition formation algorithm composed of three main phases: Neighbor discovery, coalition formation, and joint spectrum sensing and access. In the first phase, the SUs explore neighboring SUs (or coalitions) with whom they may cooperate. For discovering their neighbors, neighbor discovery algorithms suited for cognitive radio such as those in \cite{ND00} and \cite{ND01} may be used. Once neighbor discovery is complete, the next phase of the algorithm is the coalition formation phase. First, the SUs start by investigating the possibility of performing a switch operation by engaging in pairwise negotiations with discovered SUs/coalitions. Once an SU identifies a potential switch operation (satisfying (\ref{eq:prefsu}) and (\ref{eq:pref1})), it can make a \emph{distributed} decision to switch and join a new coalition. In this phase,  we consider that, the order in which the SUs make their switch operations is random but sequential (dictated by who requests first to cooperate). For any SU, a switch operation is easily performed as the SU can leave its current coalition and join the new coalition whose members already agree on the joining of this SU as per (\ref{eq:prefsu}) and (\ref{eq:pref1}).

\subsection{Convergence and Properties of the Proposed Algorithm}
The convergence of the proposed coalition formation algorithm during this phase is guaranteed as follows:
\begin{theorem}\label{th:one}
Starting from any initial network partition $\Pi_{\text{init}}$, the coalition formation phase of the proposed algorithm always converges to a final network partition $\Pi_f$ composed of a number of disjoint coalitions of SUs.
\end{theorem}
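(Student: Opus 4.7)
The plan is to show that the sequence of partitions produced by successive switch operations must terminate, so the algorithm converges to a final $\Pi_f \in \mathfrak{P}$ that, by definition of a partition, consists of disjoint coalitions of SUs. The key idea is to augment the state with the SUs' history vector and to establish that the resulting augmented state cannot recur.

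Let the coalition formation phase generate a sequence $\{(\Pi_t, \mathbf{h}_t)\}_{t \ge 0}$, where $\mathbf{h}_t = (h_i(t))_{i \in \mathcal{N}}$ collects the history sets at step $t$. By Definition~\ref{def:switch}, a switch at step $t$ by some SU $i_t$ moving from $S_{m_t} \in \Pi_t$ to $S_{k_t} \cup \{i_t\}$ requires $\phi_{i_t}(S_{k_t} \cup \{i_t\}, \Pi_{t+1}) > \phi_{i_t}(S_{m_t}, \Pi_t)$. Since the payoffs in (\ref{eq:coop}) are non-negative, the left-hand side must be strictly positive, and therefore by (\ref{eq:pref1}) the coalition $S_{k_t} \cup \{i_t\}$ cannot lie in $h_{i_t}(t)$. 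Hence once a coalition is placed in $h_i$, SU $i$ never re-enters it.

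Next I classify each switch into one of two mutually exclusive types. A \emph{non-singleton} switch, where $|S_{m_t}| \ge 2$, appends $S_{m_t}$ to $h_{i_t}$ and therefore strictly enlarges the history vector. A \emph{singleton} switch, where $S_{m_t} = \{i_t\}$ and $S_{k_t} \ne \emptyset$ (the case $S_{k_t} = \emptyset$ being a null operation that can be excluded), leaves $\mathbf{h}_t$ unchanged but strictly decreases the number of singleton blocks in the partition, since $\{i_t\}$ disappears while $S_{k_t} \cup \{i_t\}$ has size at least two. Suppose for contradiction that the sequence is infinite; since $\mathfrak{P}$ is finite and each $h_i$ lies in the finite family $2^{2^{\mathcal{N}}}$, the augmented state space is finite, so some state must repeat: $(\Pi_s, \mathbf{h}_s) = (\Pi_t, \mathbf{h}_t)$ with $s < t$. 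Monotonicity of the history vector then forces $\mathbf{h}_\tau = \mathbf{h}_s$ throughout $[s, t]$, so every switch in that interval is a singleton switch, implying that $\Pi_t$ has strictly fewer singleton blocks than $\Pi_s$, which contradicts $\Pi_s = \Pi_t$. Hence the sequence terminates at some $\Pi_f$.

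The most delicate point is the treatment of singleton switches, which do not augment the history set; the rescue is to exhibit a secondary monotone quantity, namely the number of singleton blocks of the partition, whose strict decrease under such switches rules out cycles composed purely of singleton moves. With both monotonicities combined, any infinite run is impossible and the algorithm must stabilize at a well-defined $\Pi_f$.
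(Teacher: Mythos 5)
Your overall architecture---embed the dynamics in the finite augmented state space of (partition, history-vector) pairs and exhibit monotone quantities that forbid recurrence---is a sound formalization strategy, and several steps are correct: the observation that a switch target of size larger than one can never lie in $h(i)$ (since $\phi_i$ of the target must exceed the non-negative $\phi_i$ of the current coalition, forcing it to be positive, hence the first branch of (\ref{eq:pref1}) must apply), and the exclusion of the null move $S_m=\{i\}$, $S_k=\emptyset$. However, your classification of switches into exactly two types is not exhaustive, and this is a genuine gap. A non-singleton switch ``strictly enlarges the history vector'' only if $S_{m_t}\notin h_{i_t}(t)$ at the moment of departure. But coalitions in $h(i)$ are identified by their member sets, and SU $i$ can come to sit inside a coalition already recorded in its history \emph{without ever switching into it}, because \emph{other} SUs' switch operations can reconstitute it around $i$: for instance, $i$ leaves $\{i,j\}$ (storing it in $h(i)$), later joins $\{k\}$ to form $\{i,k\}\notin h(i)$, then $j$ joins to form $\{i,j,k\}$, then $k$ departs, leaving $i$ inside $\{i,j\}\in h(i)$. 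If $i$ now leaves $\{i,j\}$---which the preferences actively encourage, since $\phi_i(\{i,j\},\Pi)=0$ by (\ref{eq:pref1})---the history set is unchanged (storage is idempotent), yet this is neither of your two types: it is not a singleton switch, and it can \emph{increase} the number of singleton blocks when $i$ departs to act alone. Consequently, in your contradiction argument, constancy of $\mathbf{h}_\tau$ on $[s,t]$ does \emph{not} imply that every switch in the interval is a singleton switch, and the claim that $\Pi_t$ has strictly fewer singleton blocks than $\Pi_s$ collapses; cycles built from singleton switches alternating with such history-preserving departures are not excluded by your two monotone quantities.

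For comparison, the paper's proof proceeds differently: it considers the sequence of partitions generated by switches and splits into case (C1), in which all partitions in the sequence are distinct, so finiteness of the number of partitions (the Bell number) terminates the sequence, and case (C2), in which a partition recurs because some SU reverted to its non-cooperative (singleton) state, whereupon it asserts that the history mechanism in (\ref{eq:prefsu})--(\ref{eq:pref1}) forces all subsequent switches by such an SU to produce new partitions, and finiteness again concludes. Note that the delicate point your argument exposes---whether moves that leave all histories unchanged can recirculate among previously visited partitions---is exactly the territory the paper's case (C2) passes over quickly; to close your proof you would need an additional argument (a third monotone quantity, or a direct impossibility argument for cycles composed of singleton switches and departures from history coalitions), not merely the two monotonicities you propose.
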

\begin{proof}
Denote by $\Pi_{n_{l,i}}^{l,i}$ the partition formed at iteration $l$ during the time SU $i\in \mathcal{N}$ needs to act after the occurrence of $n_{l,i}$ switch operations by one or more SUs up to the turn of SU $i$ in iteration $l$. Consider that the SUs act in ascending order, i.e., SU $1$ acts first, then SU $2$, and so on. Given any initial starting partition $\Pi_{\textrm{init}}=\Pi^{1,1}_{0}$, the coalition formation phase of the proposed algorithm consists of a sequence of switch operations as follows (as an example):
\begin{align}\label{eq:trans}
\Pi^{1,1}_{0}\rightarrow \Pi^{1,2}_{1} \rightarrow \ldots \rightarrow \Pi^{1,N}_{n_{1,N}} \ldots \rightarrow \Pi^{l,N}_{n_{l,N}} \rightarrow \ldots,
\end{align}
where the operator $\rightarrow$ indicates a switch operation. Based on (\ref{eq:prefsu}), for any two partitions  $\Pi_{n_{l,i}}^{l,i}$ and $\Pi_{n_{m,j}}^{m,j}$ in (\ref{eq:trans}), such that $n_{l,i} \neq n_{m,j}$, i.e., $\Pi_{n_{m,j}}^{m,j}$ is a result of the transformation of $\Pi_{n_{l,i}}^{l,i}$ (or vice versa) after a number of switch operations, we have two cases: (C1)- $\Pi_{n_{l,i}}^{l,i} \neq \Pi_{n_{m,j}}^{m,j}$, or (C2)- an SU revisited its non-cooperative state, and thus $\Pi_{n_{l,i}}^{l,i} = \Pi_{n_{m,j}}^{m,j}$.

If (C1) is true for all $i,k \in \mathcal{N}$ for any two iterations $l$ and $m$, and, since the number of partitions of a set is \emph{finite} (given by the Bell number \cite{DR00}), then the number of transformations in (\ref{eq:trans}) is finite. Hence, in this case,  the sequence in (\ref{eq:trans}) will always terminate after $L$ iterations and converge to a final partition $\Pi_f = \Pi^{L,N}_{n_{L,N}}$ (without oscillation). If case (C2) also occurs in (\ref{eq:trans}),  future switch operations (if any) for any SU that reverted to act non-cooperatively will always result in a new partition as per (\ref{eq:prefsu}). Thus, even when (C2) occurs, the finite number of partitions guarantees the algorithm's convergence to some $\Pi_f$. Hence, the coalition formation phase of the proposed algorithm always converges to a final partition $\Pi_f$.
\end{proof}

The stability of the partition $\Pi_f$ resulting from the convergence of the proposed algorithm can be studied using the following stability concept (modified from \cite{HC00} to accommodate the partition form):
\begin{definition}
A partition $\Pi = \{S_1,\ldots,S_M\}$ is  \emph{Nash-stable} if $\forall i \in \mathcal{N} \textrm{ such that } i\in S_m, S_m \in \Pi,$ we have $(S_m,\Pi) \succeq_i (S_k \cup \{i\},\Pi^{\prime})$ for all $S_k \in \Pi \cup \{\emptyset\}$ with $\Pi^{\prime} = (\Pi \setminus \{S_m,S_k\} \cup \{S_m\setminus\{i\},S_k\cup\{i\}\})$.
\end{definition}
Hence, a partition $\Pi$ is Nash-stable if no SU has an incentive to move from its current coalition to another coalition in $\Pi$ or to deviate and act alone. 

\begin{proposition}\label{prop:one}
Any partition $\Pi_f$ resulting from the coalition formation phase of the proposed algorithm is Nash-stable.
\end{proposition}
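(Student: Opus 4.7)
My plan is to prove Proposition~\ref{prop:one} by contradiction, leveraging the termination property from Theorem~\ref{th:one} together with the definitions of the switch rule and the preference relation.

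First, I would suppose that the final partition $\Pi_f$ produced by the algorithm is \emph{not} Nash-stable. By the definition of Nash-stability, this means there must exist some SU $i\in\mathcal{N}$ with $i\in S_m$ for some $S_m\in\Pi_f$, and some $S_k\in\Pi_f\cup\{\emptyset\}$ with $S_k\neq S_m$, such that
\begin{align*}
(S_k\cup\{i\},\Pi^{\prime})\succ_i (S_m,\Pi_f),
\end{align*}
where $\Pi^{\prime}=\{\Pi_f\setminus\{S_m,S_k\}\}\cup\{S_m\setminus\{i\},S_k\cup\{i\}\}$.

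Next, I would invoke the switch rule (Definition~\ref{def:switch}): the strict preference $(S_k\cup\{i\},\Pi^{\prime})\succ_i(S_m,\Pi_f)$ is precisely the condition that enables SU $i$ to execute a switch operation from $S_m$ to $S_k$. The key observation, which I would emphasize, is that the strict preference is well-defined here even with the history set and externality-aware preference function $\phi_i$ in (\ref{eq:pref1}): because the strict relation already accounts for both the consent of the members of $S_k\cup\{i\}$ (through the nondetrimental condition on their payoffs) and the fact that $S_k\cup\{i\}\notin h(i)$ (otherwise $\phi_i$ would return $0$ and the strict preference could not hold given that $\phi_i(S_m,\Pi_f)\ge 0$). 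Hence SU $i$ is indeed able and willing to switch.

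But this directly contradicts the assumption that $\Pi_f$ is the terminal partition of the coalition formation phase: by Theorem~\ref{th:one} the algorithm has converged, so no further switch operations are feasible from $\Pi_f$. Therefore no such SU $i$ and coalition $S_k$ can exist, and $\Pi_f$ must be Nash-stable. The argument is short and largely mechanical; the only delicate step that I would want to state carefully is the claim that the strict preference automatically encodes both the consent and history constraints, so that the existence of a profitable deviation in the Nash-stability sense is truly equivalent to the availability of a switch operation under Definition~\ref{def:switch}.
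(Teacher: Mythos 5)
Your proof is correct and follows essentially the same route as the paper: assume $\Pi_f$ is not Nash-stable, observe that the resulting strict preference $(S_k\cup\{i\},\Pi^{\prime})\succ_i(S_m,\Pi_f)$ is exactly the condition enabling a switch operation under Definition~\ref{def:switch}, and conclude this contradicts the convergence guaranteed by Theorem~\ref{th:one}. Your added remark that the preference function in (\ref{eq:pref1}) already encodes the consent and history constraints---so that a Nash-stability deviation is genuinely equivalent to an available switch---is a sound clarification of a step the paper leaves implicit, but it does not change the argument.
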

\begin{proof}
 If the partition $\Pi_f$ resulting from the proposed algorithm is \emph{not} Nash-stable then, there $\exists i \in \mathcal{N}$ with $i \in S_m,\ S_m\in \Pi_f$, and a coalition $S_k \in \Pi_f$ such that $(S_k \cup \{i\},\Pi^{\prime}) \succ_i (S_m,\Pi)$, and hence, SU $i$ can perform a \emph{switch} operation which contradicts with the fact that $\Pi_f$ is the result of the convergence of the proposed algorithm (Theorem~\ref{th:one}). Thus, any partition $\Pi_f$ resulting from the coalition formation phase of the proposed algorithm is Nash-stable.
\end{proof}

Following the convergence of the coalition formation phase to a Nash-stable partition, the third and last phase of the algorithm entails the joint spectrum sensing and access where the SUs operate  using the model described in Section~\ref{sec:jssa}  for locating unoccupied channels and transmitting their data cooperatively. A summary of one round of the proposed algorithm is given in Algorithm~\ref{alg:coalform}. The proposed algorithm can adapt the coalitional structure to environmental changes such as a change in the PU traffic or slow channel variations (e.g., due to slow mobility). For this purpose, the first two phases of the algorithm shown in Algorithm~\ref{alg:coalform} are repeated periodically over time, allowing the SUs, in Phase~2, to take distributed decisions to adapt the network's topology through new switch operations (which  would converge independent of the starting partition as per Theorem~\ref{th:one}). Thus, for time varying environments, every period of time $\eta$ the SUs assess whether it is possible to switch from their current coalition. Note that the history set $h(i)$ for any SU $i \in \mathcal{N}$ is also reset every $\eta$ time units. 

\subsection{Implementation Issues}
The proposed algorithm can be implemented in a distributed way, since, as already explained, the switch operation can be performed by the SUs independently of any centralized entity. First, for neighbor discovery, the SUs can either utilize existing algorithms such as those in \cite{ND00} and \cite{ND01}, or they can rely on information from control channels such as the recently proposed cognitive pilot channel~(CPC) which provides frequency, location, and other information for assisting the SUs in their operation \cite{CPC01,CPC02}. Following neighbor discovery, the SUs engage in pairwise negotiations, over control channels, with their neighbors. In this phase, given a present partition $\Pi$, for each SU, the computational complexity of finding its next coalition, i.e., locating a switch operation, is easily seen to be $O(|\Pi|)$ in the worst case, and the largest value of $|\Pi|$ occurs when all the SUs are non-cooperative, in which case $|\Pi| = N$. Clearly, as coalitions start to form, the complexity of locating a potential switch operation becomes smaller. Also, for performing a switch, each SU and coalition have to evaluate their potential utility through (\ref{eq:coop}), to determine whether a switch operation is possible. For doing so, the SUs need to know the external interference and to find all feasible permutations to compute their average capacities. Each SU in the network is made aware of the average external interference it experiences through measurements fed back from the receiver to the SU. As a result, for forming a coalition, the SUs compute the average external interference on the coalition by combining their individual measurements. Alternatively, for performing coalition formation, the SUs can also rely on information from the CPC which can provide a suitable means for gathering information on neighbors and their transmission schemes. Moreover, although, at first glance, finding all feasible permutations may appear complex, as per Section~\ref{sec:jssa}, the number of feasible permutations is generally small with respect to the total number of permutations due to the condition in (\ref{eq:proba}). Further, as cooperation entails costs, the network eventually deals with small coalitions (as will be seen in Section~\ref{sec:sim}) where finding these feasible permutations will be reasonable in complexity.
  \begin{algorithm}[e]
\caption{\footnotesize One round of the proposed coalition formation algorithm}
\label{alg:coalform}
\begin{algorithmic}
\footnotesize
\STATE \textbf{Initial State}
\STATE The network is partitioned by $\Pi_{\textrm{initial}}=\{S_1,\ldots,S_M\}$. At the beginning of all time, the network is non-cooperative, hence, $\Pi_{\textrm{init}}=\mathcal{N}$.\vspace{0.2em}
\STATE \hspace*{1em}\textbf{Phase 1 - Neighbor Discovery:}
\STATE \hspace*{1.5em}Each SU in $\mathcal{N}$ surveys its neighborhood for existing coalitions, \STATE \hspace*{1.5em}in order to learn the partition $\Pi$ in place using existing
\STATE \hspace*{1.5em}neighbor discovery algorithms such as in \cite{ND00,ND01}.
\STATE \hspace*{1em}\textbf{Phase 2 - Coalition Formation:}
\REPEAT
\STATE Each SU $i\in \mathcal{N}$ investigates potential switch operations using the preference in (\ref{eq:prefsu}) by engaging in pairwise negotiations with existing coalitions in partition $\Pi$ (initially $\Pi = \Pi_{\textrm{init}})$.
\STATE \hspace*{1.5em}Once a switch operation is found:
\STATE \hspace*{2em}a) SU $i$ leaves its current coalition.
\STATE \hspace*{2em}b) SU $i$ updates its history $h(i)$, if needed.
\STATE \hspace*{2em}c) SU $i$ joins the new coalition with the consent of its members.
\UNTIL{convergence to a Nash-stable partition}
\STATE \hspace*{1em}\textbf{Phase 3 - Joint Spectrum Sensing and Access:}
\STATE \hspace*{1.5em}The formed coalitions perform joint cooperative spectrum
\STATE \hspace*{1.5em}sensing and access  as per Section~\ref{sec:jssa}.
\STATE \textbf{By periodic runs of these phases, the algorithm allows the SUs to adapt the network structure to environmental changes (see Section~\ref{sec:coalalg}).}
\end{algorithmic}\vspace{-0.1cm}
\end{algorithm}\vspace{-0.55cm}


\section{Simulation Results and Analysis}\label{sec:sim}
\subsection{Simulation Parameters and Sample Network Snapshot}
In order to simulate our proposed approach, we setup a system-level simulator in MATLAB as follows: The BS is
placed at the origin of a $3$km $\times 3$km square area with the SUs randomly deployed in the area around it. We set the maximum SU transmit power to $\tilde{P}=10$~mW, the noise variance to $\sigma^2=-90$~dBm, and the path loss exponent to $\mu=3$. Unless stated otherwise, we set the fraction of time for sensing a single channel to $\alpha=0.05$ and we consider networks with $K=14$~channels\footnote{As an example, this can map to the total channels in 802.11b, although the actual used number varies by region ($11$ for US, $13$ for parts of Europe, etc.) \cite{RAP00}.}. In addition, non-cooperatively, we assume that each SU can accurately learn the statistics of $K_i=3$ channels, $\forall i \in \mathcal{N}$ (for every SU $i$ these non-cooperative $K_i$~channels are randomly chosen among the available PUs\footnote{This method of selection is considered as a general case, other methods for non-cooperatively choosing the PU channels can also be accommodated.}).

\begin{figure}[e]
\begin{center}
\includegraphics[width=10cm]{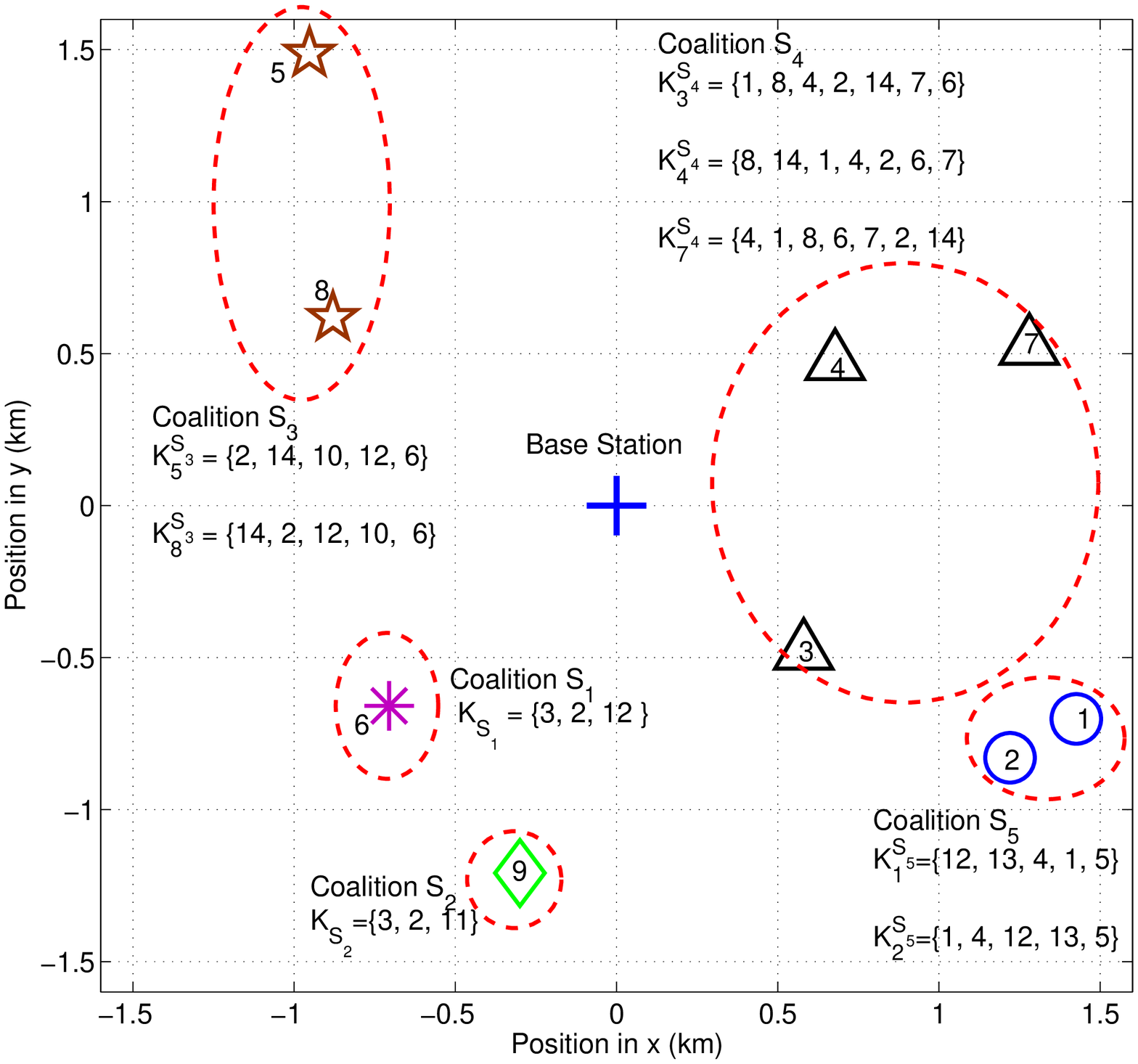}
\end{center}\vspace{-1cm}
\caption {A snapshot of a network partition resulting from the proposed algorithm with $N=9$~SUs and $K=14$~channels.} \label{fig:snap}
\end{figure}

 Fig.~\ref{fig:snap} shows a snapshot of the network structure resulting from the proposed coalition formation algorithm for a randomly deployed network with $N=9$~SUs and $K=14$ channels. The probabilities that the channels are unoccupied are: $\theta_1=0.98$, $\theta_2=0.22$, $\theta_3=0.64$, $\theta_4=0.81$, $\theta_5=0.058$, $\theta_6=0.048$, $\theta_7=0.067$, $\theta_8=0.94$, $\theta_9=0.18$, $\theta_{10}=0.25$, $\theta_{11}=0.17$, $\theta_{12}=0.15$,  $\theta_{13}=0.23$, $\theta_{14}=0.36$. In Fig.~\ref{fig:snap}, the SUs self-organize into $5$ coalitions forming partition $\Pi_f=\{S_1, S_2, S_3, S_4, S_5\}$. For each coalition in $\Pi_f$,  Fig.~\ref{fig:snap} shows the \emph{sorted} (by Algorithm~\ref{alg:sort}) set of channels used by the SUs in the coalitions (note that channel $9$ was not learned by any SU non-cooperatively). By inspecting the channel sets used by $S_3, S_4$, and $S_5$, we note that, by using Algorithm~1 the SUs sort their channels in a way to avoid selecting the same channel at the same rank, when possible. This is true for all ranks of these coalitions with two exceptions: The last rank for coalition $S_3$ where SUs $5$ and $8$ both rank channel $6$ last since it is rarely available as $\theta_6 = 0.048$, and, similarly, the last rank for coalition $S_5$ where SUs $1$ and $2$ both select channel $5$ (ranked lowest by both SUs) since it is also seldom available as  $\theta_5 = 0.058$. The partition $\Pi_f$ in Fig.~\ref{fig:snap} is Nash-stable, as no SU has an incentive to change its coalition. For example, the non-cooperative utility of SU $9$ is $x_9(\{9\},\Pi_f)=1.1$, by joining with SU $6$, this utility drops to $0.38$, also, the utility of SU $6$ drops from $x_6(\{6\},\Pi_f)=1.79$ to $1.63$. This result shows that cooperation can entail a cost, notably, due to the fact that
that both SUs $6$ and $9$ know, non-cooperatively, almost the same channels (namely, $3$ and $2$), and hence, by cooperating they suffer a loss in sensing time which is not compensated by the access gains. Due to the cooperation tradeoffs, the utility of SU $9$, drops to $0.797$, $0.707$, and $0.4624$, if SU $9$ joins coalitions $S_3$, $S_4$, or $S_5$, respectively. Thus, SU $9$ has no incentive to switch its current coalition. This property can be verified for all SUs in Fig~\ref{fig:snap} by inspecting the variation of their utilities if they switch their coalition and, thus, partition $\Pi_f$ is Nash-stable. 

\subsection{Performance Assessment}
In Fig.~\ref{fig:perf}, we show the average payoff achieved per SU per slot for a network with $K=14$ channels as the number of SUs, $N$, in the network increases. The results are averaged over random positions of the SUs and the random realizations of the probabilities $\theta_k, \forall k \in \mathcal{K}$. The performance of the proposed algorithm is compared with the classical non-cooperative scheme as well as with the optimal centralized solution found using an exhaustive search. Note that beyond $N=8$~SUs finding the optimal solution becomes mathematically and computationally intractable as the number of partitions increases exponentially with $N$ as per the Bell number~\cite{DR00}. Fig.~\ref{fig:perf} shows that, as the number of SUs $N$ increases, the performance of all three schemes decreases due to the increased interference. However, at all network sizes, the proposed coalition formation algorithm maintains a better performance compared to the non-cooperative case. In fact, the proposed joint spectrum sensing and access presents a significant performance advantage over the non-cooperative case, increasing with $N$ as the SUs are more likely (and willing, due to increased interference) to find cooperating partners when $N$ increases. This performance advantage reaches up to $86.8\%$ relative to the non-cooperative case at $N=20$~SUs. Further, Fig.~\ref{fig:perf} shows that the optimal solution has a $23.1\%$ advantage over the proposed scheme at $N=4$~SUs, but this advantage decreases to around $19.9\%$ at $N=8$~SUs. This result indicates that the performance of the Nash-stable partitions resulting from the proposed algorithm becomes closer to the optimal solution as the number of SUs $N$ increases. This implies that, as more partners become available for selection, the proposed algorithm can reach a more efficient Nash-stable partition.

\begin{figure}[e]
\begin{center}
\includegraphics[width=100mm]{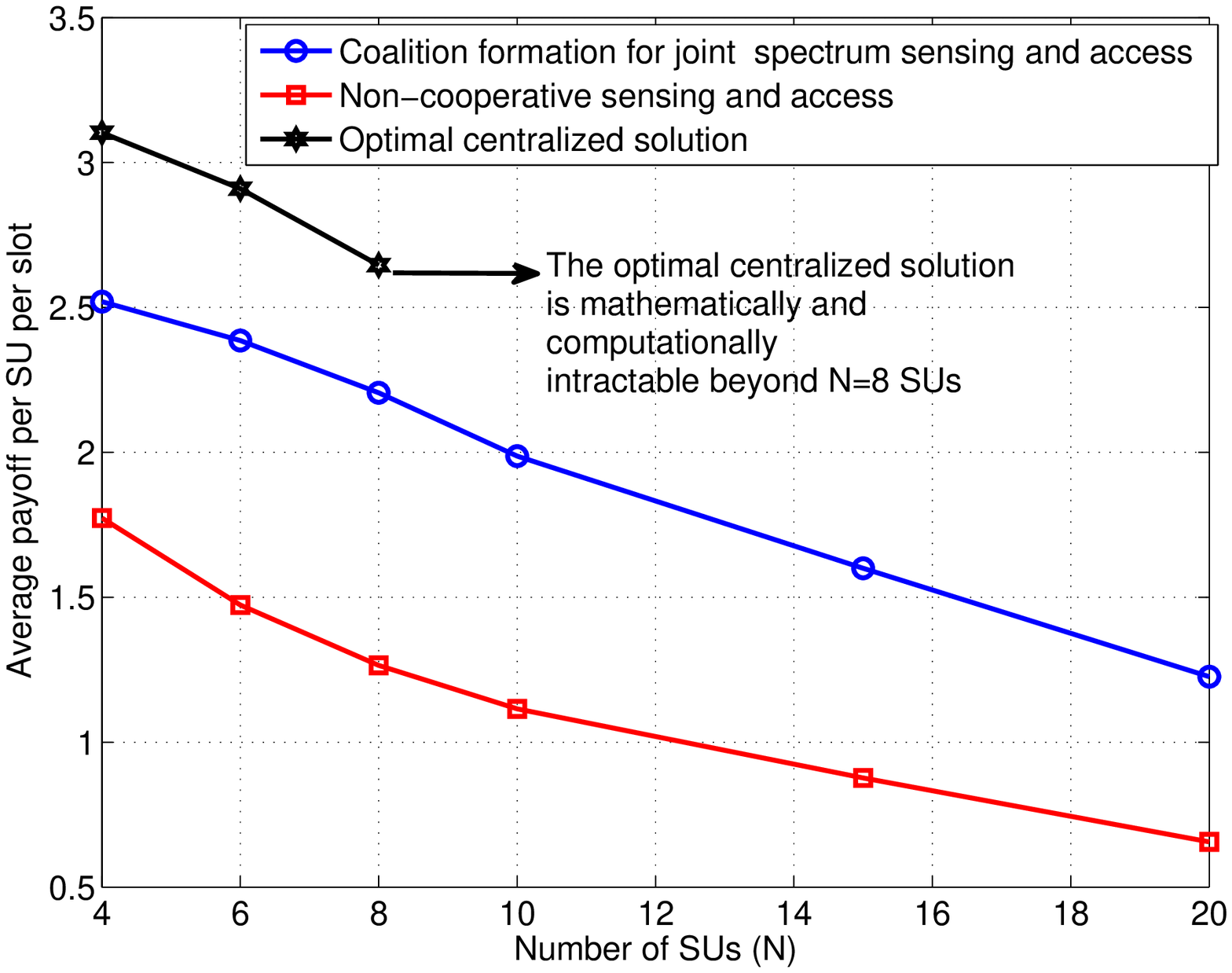}
\end{center}\vspace{-0.8cm}
\caption {Average payoff achieved per SU per slot (averaged over random positions of the SUs and the random realizations of the probabilities $\theta_k, \forall k \in \mathcal{K}$) for a network with $K=14$ channels as the network size $N$ varies.} \label{fig:perf}
\end{figure}
\begin{figure}[e]
\begin{center}
\includegraphics[width=100mm]{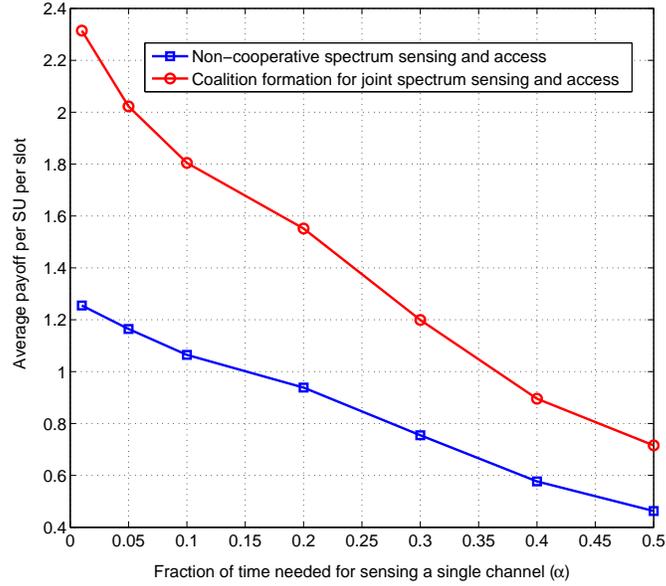}
\end{center}\vspace{-0.8cm}
\caption {Average payoff achieved per SU per slot (averaged over random positions of the SUs and the random realizations of the probabilities $\theta_k, \forall k \in \mathcal{K}$) for a network with $N=10$~SUs and $K=14$~channels as the fraction of time needed for sensing a single channel $\alpha$ varies.} \label{fig:perfalpha}
\end{figure}
\begin{figure}[e]
\begin{center}
\includegraphics[width=100mm]{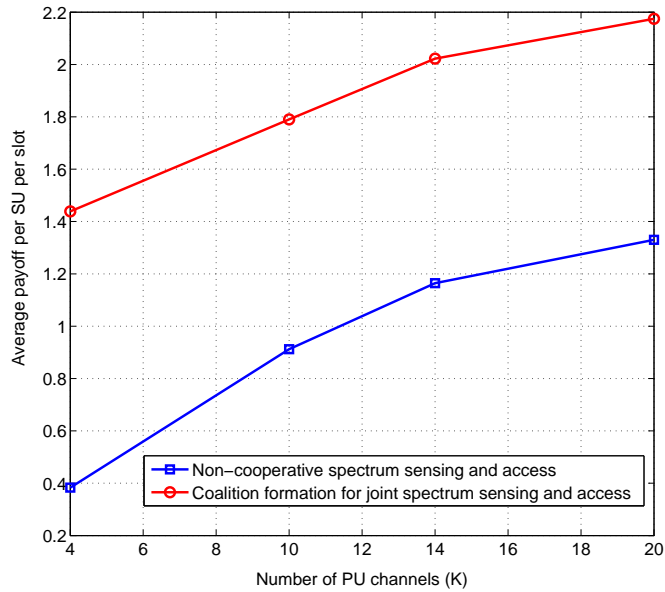}
\end{center}\vspace{-0.8cm}
\caption {Average payoff achieved per SU per slot (averaged over random positions of the SUs and the random realizations of the probabilities $\theta_k, \forall k \in \mathcal{K}$) for a network with $N=10$~SUs as the number of channels $K$ varies.} \label{fig:perfchan}
\end{figure}

\begin{figure}[e]
\begin{center}
\includegraphics[width=100mm]{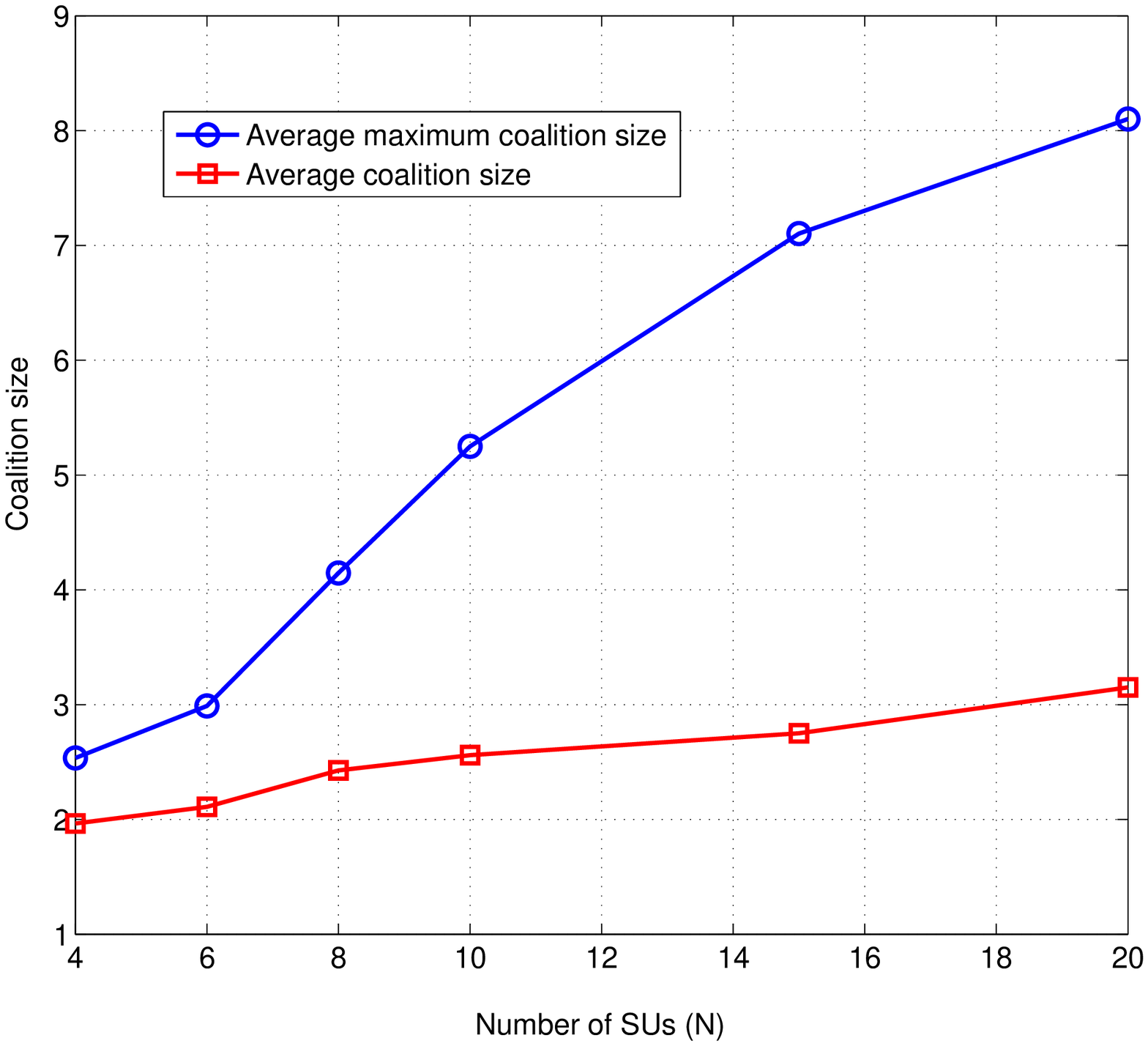}
\end{center}\vspace{-0.8cm}
\caption {Average and average maximum coalition size (averaged over random positions of the SUs and the random realizations of the probabilities $\theta_k, \forall k \in \mathcal{K}$) for a network with $K=14$ channels as the network size $N$ varies.} \label{fig:size}
\end{figure}

\begin{figure}[e]
\begin{center}
\includegraphics[width=100mm]{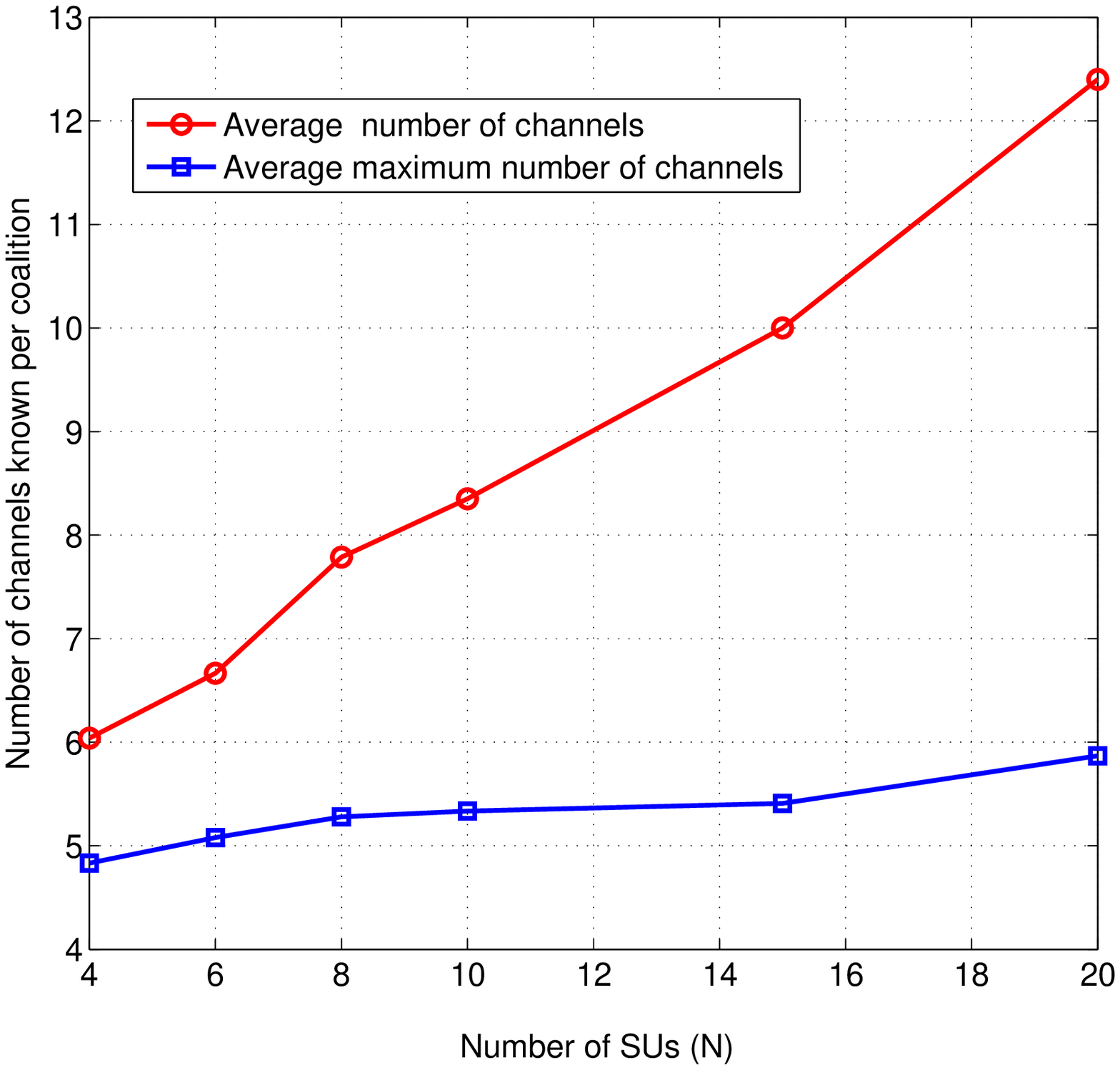}
\end{center}\vspace{-0.8cm}
\caption {Average and average maximum number of channels (averaged over random positions of the SUs and the random realizations of the probabilities $\theta_k, \forall k \in \mathcal{K}$) known per coalition for a network with $K=14$ channels as the network size $N$ varies.} \label{fig:numchan}
\end{figure}

In Fig.~\ref{fig:perfalpha}, we show the average payoff achieved per SU per slot for a network with $N=10$~SUs and $K=14$ channels as the fraction of time needed for sensing a single channel $\alpha$ increases. The results are averaged over random positions of the SUs and the random realizations of the probabilities $\theta_k, \forall k \in \mathcal{K}$. Fig.~\ref{fig:perfalpha} demonstrates that, as the amount of time $\alpha$ dedicated for sensing a single channel increases, the time that can be allotted for spectrum access is reduced, and, thus, the average payoff per SU per slot for both cooperative and non-cooperative spectrum sensing and access decreases. In this figure, we can see that, at all $\alpha$, the proposed joint spectrum sensing and access through coalition formation exhibits a performance gain over the non-cooperative case. This advantage decreases with $\alpha$, but it does not go below an improvement of $54.7\%$ relative to the non-cooperative scheme at $\alpha=0.5$, i.e., when half of the slot is used for sensing a single channel.

 Fig.~\ref{fig:perfchan} shows the average payoff achieved per SU per slot for a network with $N=10$ SUs as the number of PU channels, $K$, increases. The results are averaged over random positions of the SUs and the random realizations of the probabilities $\theta_k, \forall k \in \mathcal{K}$. In this figure, we can see that as the number of channels $K$ increases, the performance of both cooperative and non-cooperative spectrum sensing and access increases. For the non-cooperative case, this increase is mainly due to the fact that, as more channels become available, the possibility of interference due to the non-cooperative channel selection is reduced. For the proposed coalition formation algorithm, the increase in the performance is also due to the increased number of channels that the SUs can share as $K$ increases. Furthermore,  Fig.~\ref{fig:perfchan}  demonstrates that the proposed joint spectrum sensing and access
 presents a significant performance advantage over the non-cooperative case  which is at least $63.5\%$ for $K=20$ and increases for networks with smaller channels. The increase in the performance advantage highlights the ability of the SUs to reduce effectively the mutual interference through the proposed coalition formation algorithm.

\subsection{Coalition Size and Known Channels}
In Fig.~\ref{fig:size}, we show the average and average maximum coalition size (averaged over the random positions of the SUs and the random realizations of the probabilities $\theta_k, \forall k \in \mathcal{K}$) resulting from the proposed algorithm as the number of SUs, $N$, increases, for a network with $K=14$~channels. Fig.~\ref{fig:size} shows that, as $N$ increases, both the average and maximum coalition size increase with the average having a smaller slope. Further, we note that the average and average maximum coalition size reach around $3.2$ and $8$ at $N=20$, respectively. Hence, Fig.~\ref{fig:size} demonstrates that, although some large coalitions are emerging in the network, on the average, the size of the coalitions is relatively small. This result is due to the fact that, as mentioned in  Section~\ref{sec:jssa}, although cooperation is beneficial, it is also accompanied by costs due to the needed re-ordering of the channels, the occurrence of new interference due to channel sharing, and so on. These costs limit the coalition size on the average. Thus, Fig.~\ref{fig:size} shows that, when using coalition formation for joint spectrum sensing and access, the resulting network is, in general, composed of a large number of small coalitions with the occasional formation of large coalitions. In brief,  Fig.~\ref{fig:size} provides insight into the network structure when the SUs cooperate for joint spectrum sensing and access.

Fig.~\ref{fig:numchan} shows the average and average maximum number of channels known per coalition (averaged over the random positions of the SUs and the random realizations of the probabilities $\theta_k, \forall k \in \mathcal{K}$) as the number of SUs, $N$, increases, for a network with $K=14$~channels. Fig.~\ref{fig:numchan} demonstrates that both the average and average maximum number of known channels per coalition increase with the network size $N$. This increase is due to the fact that, as more SUs are present in the network, the cooperation possibilities increase and the number of channels that can be shared per coalition also increases. In this regard, the average number of known channels ranges from around $4.8$ for $N=4$ to around $5.9$ for $N=20$, while the average maximum goes from $6$ at $N=4$ to $12.4$ at $N=20$. This result shows that the increase in the average number of known channels is small while that of the maximum is more significant. This implies that, due to the cooperation tradeoffs, in general, the SUs have an incentive to share a relatively moderate number of channels with the emergence of few coalitions sharing a large number of channels.

\subsection{Adaptation to Environmental Changes}
In Fig.~\ref{fig:theta}, we show, over a period of $4$ minutes (after the initial network formation), the evolution of a network of $N=10$~SUs and $K=14$~channels over time when the PUs' traffic, i.e., the probabilities $\theta_k,\forall k\in \mathcal{K}$ vary, independently, every $1$~minute. As the channel occupancy probability varies, the structure of the network changes, with new coalitions forming and others breaking up due to switch operations occurring. The network starts with a non-cooperative structure made up of $10$
non-cooperative SUs. First, the SUs self-organize
in $3$ coalitions upon the occurrence of $8$~switch operations as per Fig.~\ref{fig:theta} (at time $0$). With time, the SUs can adapt the network's structure to the changes in the traffic of the PUs through adequate switch operations. For example, after $1$ minute has elapsed, the number of coalitions increase from $3$ to $4$ as the SUs perform $5$ switch operations. After a total of $18$~switch operations over the $4$ minutes, the final partition is made up of $4$ coalitions that evolved from the initial $3$ coalitions.

In Fig.~\ref{fig:speed}, we show the average total number of switch operations per minute (averaged over the random positions of the SUs and the random realizations of the probabilities $\theta_k, \forall k \in \mathcal{K}$) for various speeds of the SUs for networks with $K=14$~channels and for the cases of $N=10$~SUs and $N=15$~SUs. The SUs are moving using a random walk mobility model for a period of $2.5$ minutes with the direction changing every $\eta=30$~seconds. As the velocity increases,  the average frequency of switch operations increases for all network sizes due to the dynamic changes in the network structure incurred by more mobility. These switch operations result from that fact that, periodically, every $\eta=30$~seconds, the SUs are able to reengage in coalition formation through Algorithm~\ref{alg:coalform}, adapting the coalitional structure to the changes due to mobility. The average total number  of switch operations per minute also increases with the number of SUs as the possibility of finding new cooperation partners becomes higher for larger $N$. For example, while for the case of $N=10$~SUs the average frequency of switch operations varies from $4.8$ operations per minute at a speed of $18$~km/h to $15.2$ operations per minute at a speed of $72$~km/h, for the case of $N=15$~SUs, the increase is much steeper and varies from $6.4$ operations per minute at $18$~km/h to $26$ operations per minute at $72$~km/h.

The network's adaptation to mobility is further assessed in Fig.~\ref{fig:life} where we show, over a period of $2.5$ minutes, the average coalition lifespan (in seconds) achieved for various speeds of the SUs in a cognitive network with $K=14$~channels and different number of SUs. The mobility model is similar to the one used in Fig.~\ref{fig:speed} with $\eta=30$~seconds. We define the coalition lifespan as the time (in seconds) during which a coalition operates in the  network prior to accepting new SUs or breaking into smaller coalitions (due to switch operations). Fig.~\ref{fig:life} shows that, as the speed of the SUs increases, the average lifespan of a coalition decreases due to the fact that, as mobility becomes higher, the likelihood of forming new coalitions or splitting existing coalitions increases. For example, for $N=15$~SUs, the coalition lifespan drops from around $69.5$~seconds for a velocity of $18$~km/h to around $53.5$~seconds at $36$~km/h, and down to about $26.4$ seconds at $72$~km/h. Furthermore, Fig.~\ref{fig:life} shows that as more SUs are present in the network, the coalition lifespan decreases. For instance, for any given velocity, the lifespan of a coalition for a network with $N=10$~SUs is larger than that of a coalition in a network with $N=15$~SUs. The main reason behind the decrease in coalition lifespan with $N$ is that, for a given speed, as $N$ increases, the SUs are more able to find new partners to join with as they move. In a nutshell, Fig.~\ref{fig:life} provides an interesting assessment of the topology adaptation aspect of the proposed coalition formation algorithm through switch operations.

Finally,  we note that, in order to highlight solely the changes due to mobility, the fading amplitude was considered constant in Fig.~\ref{fig:speed} and Fig.~\ref{fig:life}. Similar results can be seen when the fading amplitude also changes.

\section{Conclusions}\label{sec:conc}
In this paper, we have introduced a novel model for cooperation in cognitive radio networks, which accounts for both the spectrum sensing and spectrum access aspects. We have modeled the problem as a coalitional game in partition form and we have derived an algorithm that allows the SUs to make distributed decisions for joining or leaving a coalition, depending on their achieved utilities which account for the average time to find a unoccupied channel (spectrum sensing) and the average achieved capacity (spectrum access). We have shown that, by using the proposed coalition formation algorithm, the SUs can self-organize into a Nash-stable network partition, and adapt this topology to environmental changes such as a change in the traffic of the PUs or slow mobility. Simulation results have shown that the proposed algorithm yields gains, in terms of average payoff per SU per slot, reaching up to $86.8\%$ relative to the non-cooperative case for a network with $20$~SUs.
\def\baselinestretch{0.93}
\bibliographystyle{IEEEtran}
\bibliography{references}
\begin{figure}[e]
\begin{center}
\includegraphics[width=100mm]{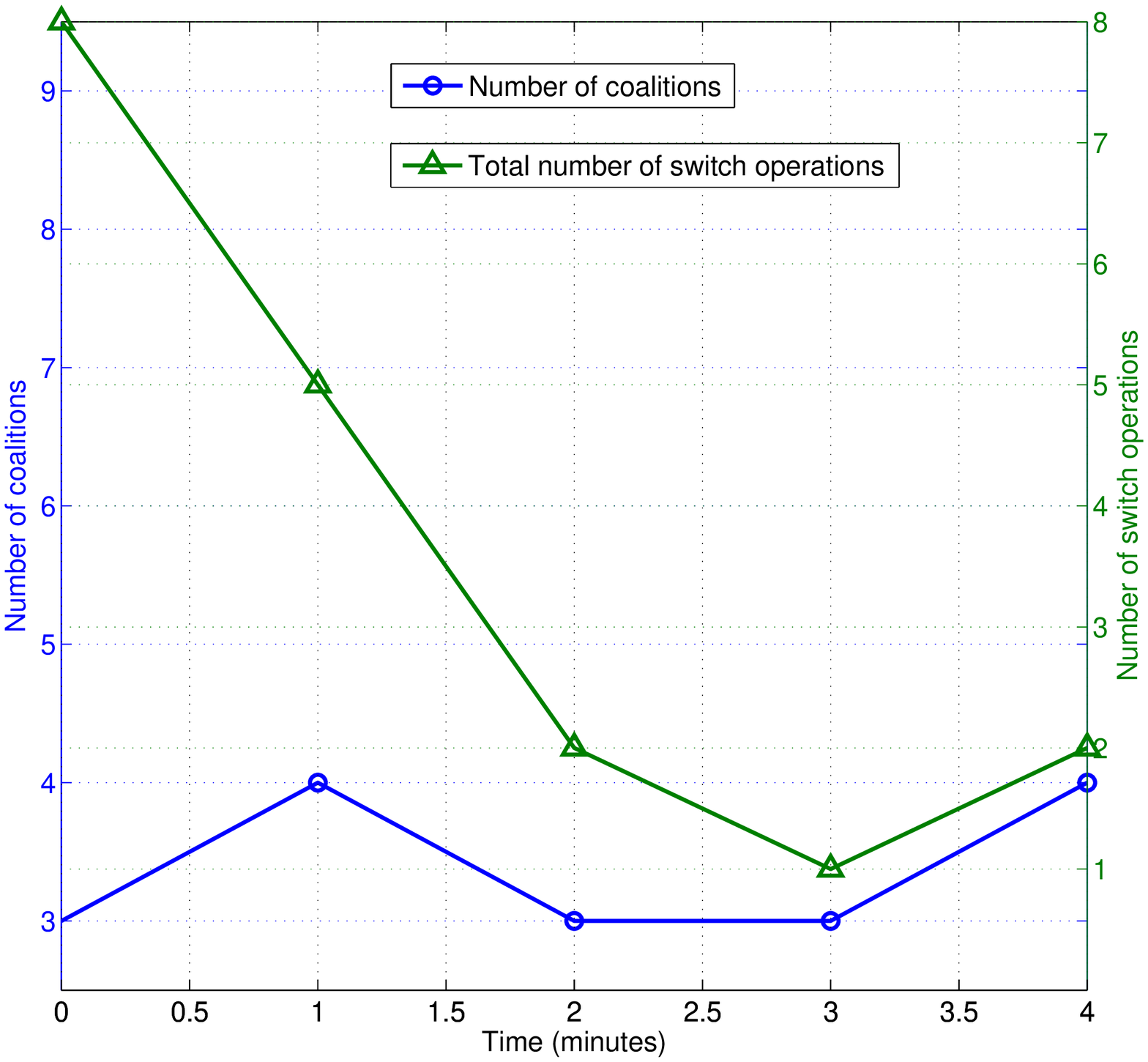}
\end{center}\vspace{-0.7cm}
\caption {Network structure evolution with time for $N=10$~SUs, as the traffic of the PUs, i.e., $\theta_k \forall k \in \mathcal{K}$ varies over a period of $4$~minutes.} \label{fig:theta}\vspace{-0.8cm}
\end{figure}

\begin{figure}[e]
\begin{center}
\includegraphics[width=100mm]{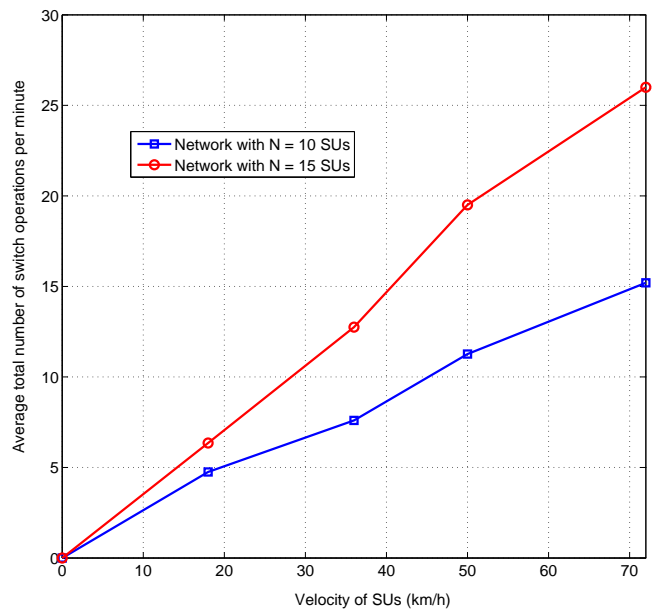}
\end{center}\vspace{-0.8cm}
\caption {Average frequency of switch operations per minute (averaged over random positions of the SUs and the random realizations of the probabilities $\theta_k, \forall k \in \mathcal{K}$)  for
different speeds in a network with $K=14$ channels for $N=10$~SUs and $N=15$~SUs.} \label{fig:speed}\vspace{-0.75cm}
\end{figure}
\begin{figure}[e]
\begin{center}
\includegraphics[width=100mm]{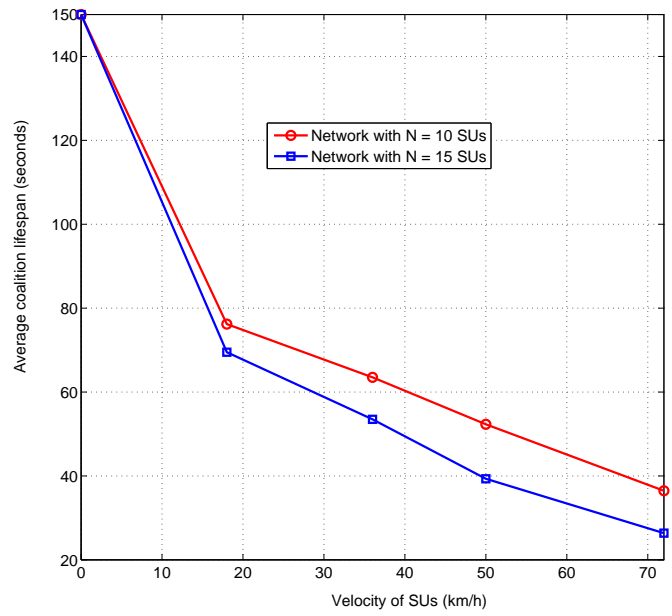}
\end{center}\vspace{-0.8cm}
\caption {Average coalition lifespan in seconds (averaged over random positions of the SUs and the random realizations of the probabilities $\theta_k, \forall k \in \mathcal{K}$) for
different speeds in a network with $K=14$ channels for $N=10$~SUs and $N=15$~SUs.} \label{fig:life}\vspace{-0.75cm}
\end{figure}

\end{document}